\renewcommand{\Im}{\operatorname{Im}}
\renewcommand{\Re}{\operatorname{Re}}
\newcommand{\e}{{\rm e}}
\newcommand{\im}{{\rm i}}
\newcommand{\E}{{\mathbb E}}
\newcommand{\Pa}{{\mathbb P}}
\newcommand{\Q}{{\mathbb Q}}
\newcommand{\R}{{\mathbb R}}
\newcommand{\N}{{\mathbb N}}
\newcommand{\Ecal}{{\mathcal E}}
\newcommand{\Fcal}{{\mathcal F}}
\newcommand{\Gcal}{{\mathcal G}}
\newcommand{\Lcal}{{\mathcal L}}
\newcommand{\Ncal}{{\mathcal N}}
\newcommand{\Ind}{{ 1}}
\DeclareMathOperator{\sign}{sign}
\DeclareMathOperator{\diag}{diag}
\DeclareMathOperator*{\argmin}{arg\,min}
\newtheorem{proposition}{Proposition}[section]
\newtheorem{lemma}[proposition]{Lemma}
\newtheorem{theorem}[proposition]{Theorem}
\newtheorem{remark}[proposition]{Remark}
\newtheorem{exampleemph}[proposition]{Example}   
\newenvironment{example}{\begin{exampleemph}\begin{upshape}}{\end{upshape}\end{exampleemph}} 
\newtheorem{foo}[proposition]{Remarks}
\newcolumntype{L}{>{$}l<{$}}
\date{May 20, 2019}
\title{Option Pricing with Orthogonal Polynomial Expansions\footnote{We thank the participants at the 2017 CIB workshop on Dynamical Models in Finance in Lausanne and at the Bachelier World Congress 2018 in Dublin, as well as Christian Gollier, Vadim Linetsky, Fabio Nobile, Uffe H{\o}gsbro Thygesen, Sander Willems, and two anonymous referees for comments.
The research leading to these results has received funding from the European Research Council under the European Union's Seventh Framework Programme (FP/2007-2013) / ERC Grant Agreement n.~307465-POLYTE.}}
\author{Damien Ackerer\footnote{Swissquote Bank, 1196 Gland, Switzerland. {\it Email: }damien.ackerer@swissquote.ch} \and Damir Filipovi\'c\footnote{EPFL and Swiss Finance Institute, 1015 Lausanne, Switzerland. {\it Email: }damir.filipovic@epfl.ch}}
\begin{document}

\maketitle

\begin{center} 
\large forthcoming in \textit{Mathematical Finance} 
\end{center} 

\bigskip 

\begin{abstract}
We derive analytic series representations for European option prices in polynomial stochastic volatility models.
This includes the Jacobi, Heston, Stein--Stein, and Hull--White models, for which we provide numerical case studies.
We find that our polynomial option price series expansion performs as efficiently and accurately as the Fourier transform based method in the nested affine cases.
We also derive and numerically validate series representations for option Greeks.
We depict an extension of our approach to exotic options whose payoffs depend on a finite number of prices.
\end{abstract}

\smallskip
\noindent{\bf Keywords:} Greeks, option pricing, orthogonal polynomials, parameter sensitivity, polynomial diffusion models, stochastic volatility \\

\smallskip
\noindent{\bf MSC2010 Classification:} 91G20, 91G60, 60H30, 60J60\\

\smallskip
\noindent{\bf JEL Classification:} C32, G12, G13

\section{Introduction}

We study the class of stochastic volatility models for asset price log returns $dX$ in which the squared volatility, $d\langle X\rangle/dt$, is given by a polynomial in an underlying univariate polynomial diffusion $Y$. This class nests many models of interest having bounded and unbounded volatility support, including the Jacobi, Heston, Stein--Stein, and Hull--White models. Such polynomial models are highly tractable because the conditional moments of the marginal distributions of the log price $X_T$ are given in analytic form in terms of a matrix exponential. We use this to derive the prices of options with discounted payoff $f(X_T)$ in terms of analytic orthogonal polynomial expansions. Our approach consists of two steps. Observing that $X_T$ has a density $g$ given by a Gaussian mixture with an infinite number of components, we first approximate $g$ by an auxiliary, finite Gaussian mixture density $w$. We provide explicit conditions under which the likelihood ratio function $\ell=g/w$, as well as the discounted payoff function $f$, lie in the weighted space $L^2_w$ of square-integrable functions with respect to $w$. The corresponding option price then admits a series representation of the form $\pi_f=\langle f,\ell\rangle_w=\sum_{n\ge0} f_n\ell_n$ where the Fourier coefficients $f_n$ and $\ell_n$ of $f$ and $\ell$ are with respect to an orthonormal basis of polynomials $\{H_n,\ n\ge 0\}$ of $L^2_w$. Thanks to the polynomial property of the model, the likelihood coefficients $\ell_n=\E[H_n(X_T)]$ are given in analytic form, in terms of a matrix exponential. For important examples, including European call options, the payoff coefficients $f_n$ are also given in analytic form. The option price $\pi_f=\sum_{n\ge0} f_n\ell_n$ is approximated by truncating the series at a finite order $N$. This option price approximation is accurate for a small truncation order $N$ if the true density $g$ is statistically close to the auxiliary density $w$. Even if $\ell$ does not belong to $L^2_w$, so that the sequence of option price approximations $\sum_{n=0}^N f_n\ell_n$ does not converge as $N\to\infty$, we show that a finite order $N$ can provide an accurate approximation of the true price $\pi_f$. We also provide simple and efficient algorithms to construct the auxiliary mixture density $w$ and the corresponding orthonormal polynomials. We derive recursive systems of equations for the payoff coefficients $f_n$ in case of a European call option and for auxiliary Gaussian and Gamma mixture densities.

We find that the option price sensitivities $\partial_\theta\pi_f$ with respect to specific model parameters $\theta$, also known as option Greeks, admit a series representation in terms of the derivatives of the likelihood coefficients, $\partial_\theta \ell_n$. For $\theta$ being the initial asset price this gives the option Delta and carries over to higher order derivatives, such as the option Gamma. We find that the Delta and Gamma are easy to calculate and given in analytic form. In general, the computation of the derivatives $\partial_\theta \ell_n$ boils down to the computation of the Fr\'echet derivative of a matrix exponential. But this problem is numerically well addressed by complex-step differentiation, which in turn renders practical option Greek calculation and gradient based optimization for model calibration.

We also discuss an extension of our approach to construct option price series representations for options whose payoffs depend on multiple prices. We first de-correlate the respective log returns via a linear transformation. We let then the corresponding multivariate auxiliary density $w$ be a tensor product of univariate densities. The multivariate orthonormal polynomial basis is also given by a tensor product of the univariate bases, and the likelihood coefficients $\ell_n$ are again given in analytic form. The payoff coefficients $f_n$ may however not be in analytic form in higher dimensions, and we provide an effective weighted least squares algorithm to approximate them.

We validate our method with several numerical applications. In the Jacobi model with a single auxiliary Gaussian density $w$, the accuracy of the European call option price approximation for a fixed $N$ decreases rapidly as the upper bound of the volatility support increases.
We therefore let the auxiliary density $w$ be a mixture of two Gaussian distributions whose first two moments are matching the log price density $g$.
We show that the option price series converges significantly faster using this Gaussian mixture density in comparison to a single Gaussian density.
We also approximate European call option prices in the Stein--Stein model using Gaussian mixtures with varying number of components and for various parameter choices, and benchmark them to the option prices computed using Fourier transform techniques. The latter is possible as the Stein--Stein model can be embedded in an affine framework. We also price European call options in the Hull--White model, which is shown to induce large log price kurtosis values. Despite the fact that the likelihood ratio function $\ell$ in these models does not belong to $L^2_w$, where $w$ is a Gaussian mixture density, we find that our approach still produces accurate option price approximations. We also price European call options on the realized variance in the GARCH diffusion model.
We compare our Delta and Gamma approximations with the values obtained by Fourier transform in the affine Heston model and find them to be quantitatively equivalent.
We compute option price sensitivities with respect to $\theta$ and $\rho$ in the Heston model for many parameter values and via complex-step differentiation in order to illustrate the benefits for model analysis.
We additionally show that one can perform model calibration faster with our approach than with the Fourier transform approach on data samples with more than 30 options.
We benchmark the approximation of the payoff coefficients with the weighted least squares algorithm against the recursive formulas for the call option and for the forward start call option when the auxiliary density is a Gaussian with diagonal covariance matrix, we find that the option price approximations are almost equal.

Stochastic volatility models for asset returns are necessary to reproduce realistically the dynamical behavior of asset prices.
Such models were first introduced by~\cite{hull1987pricing}, \cite{scott1987option}, \cite{wiggins1987option} to overcome the assumption of constant volatility in the seminal works by \cite{black1973pricing}, and~\cite{merton1973theory}.
However, in the context of derivatives, a stochastic volatility model is practical only if tractable formulas are available to approximate option prices.
The \cite{heston1993closed} model therefore became a benchmark because it has a quasi-analytical formula to price European options.
The Heston model belongs to the large class of affine models whose development led to many new tractable models, see \cite{duffie2003affine}.
Indeed, the characteristic function of the log price can be computed numerically in affine models which enables the use of standard Fourier transform techniques to compute European option prices, see for details~\cite{carr1999option,fang2009novel,lipton2008stochastic}.
Option pricing via orthogonal polynomials in the Heston model with jumps has also been proposed in~\cite{carr2008hedging}.
Our paper adds to this literature by extending the range of tractable models beyond the affine class. The literature on option pricing in non-affine polynomial models contain \cite{ackerer2018jacobi} for the Jacobi model or \cite{filipovic2016quadratic} for variance swaps. Our results improve the option pricing performance in~\cite{ackerer2018jacobi}.
Notable progress has recently been made on the pricing of early-exercise and path-dependent options with stochastic volatility using recursive marginal quantization as studied in \cite{pages2015recursive,mcwalter2017recursive,callegaro2017pricing}.
In the context of affine and polynomial models, this approach has been shown to perform well when combined with Fourier transform techniques as in~\cite{callegaro2017quantization}, or polynomial expansion techniques as in~\cite{callegaro2017quantizationb}, whose results could be further improved with the new expansions presented in our paper.
The calculation of Greeks for stochastic volatility models is a difficulty task often adressed by Monte Carlo simulations, see for examples \cite{broadie2004exact} for a discussion of different simulation based estimators in the Heston model, and \cite{chan2015first} for more recent advances using algorithmic differentiation.

Our approach builds on Hilbert space methods and thus relates to a large literature on eigenfunction expansion methods where the basis functions are exact eigenfunctions of the pricing semigroup, see \cite{lewis1998applications,davydov2003pricing,linetsky2004lookback,linetsky2004spectral,li2015discretely,lorig2014pricing}.
Although this approach successfully addressed complex pricing problems such as Asian and lookback options, the exact eigenfucntions of Markov semigroups are known in analytic form only for a few exceptional cases.
In contrast, our approach offers a hands-on method that works for a large class of models and gives analytic expressions, in terms of series, for option prices based on matrix exponentials or, equivalently, on solutions to linear ODEs.
Hilbert space approximation methods have also been extensively studied in the PDE literature where a PDE problem is typically reformulated as a variational problem and a solution is sought as a projection on a finite set of basis functions, often orthogonal polynomials.
We refer the interested reader to~\cite{feng2007variational,cohen2011analytic,back2011stochastic,beck2012optimal}.

The remainder of the paper is as follows.
Section~\ref{sec:expan} presents the density expansion and option price series representation with an auxiliary mixture density.
Section~\ref{sec:polSV} introduces the polynomial stochastic volatility models, derives the option Greek series representation, and describes some Gaussian mixture constructions for the auxiliary density.
Section~\ref{sec:ext} discusses an extension to multivariate payoff functions and the payoff coefficients approximation with weighted least squares methods.
Section~\ref{sec:MIXapp} contains the numerical applications.
Section~\ref{sec:ccl} concludes.
The proofs are collected in Appendix~\ref{sec:MixProofs}.
Alternative moment based methods to construct orthonormal polynomial basis can be found in Appendix~\ref{sec:ONBmts}.
The Fourier transform formulas for the option price, Delta, and Gamma in the Heston model are reported in Appendix~\ref{sec:Heston}.

\section{Polynomial Price Series Expansions} \label{sec:expan}

We recap the density expansion approach described in~\cite{filipovic2013density} along with the option price series representation further developed in~\cite{ackerer2018jacobi}.
We show how the elements of these option price series can be efficiently computed when the auxiliary density is a mixture density.
We then give some examples of component densities which are convenient to work with.

\subsection{European Option Price Series Representation}

Fix a maturity $T>0$ and assume that the distribution of the log price $X_T$ of an asset has a density $g(x)$.
Our goal is to compute the option price
\begin{equation} \label{eq:pricing}
\pi_f =  \int_\R f(x)g(x)dx
\end{equation}
for some discounted payoff function $f(x)$.
Let $w(x)$ be an auxiliary density that dominates $g(x)$ with likelihood ratio $\ell(x)$ given by
\[
g(x) = \ell(x) w(x).
\]
We define the weighted Lebesgue space,
\[
L_w^2 = \left\{ f(x) \mid \lVert f \rVert^2_w = \int_\R f(x)^2 w(x)dx < \infty
 \right\},
\]
which is a Hilbert space with the scalar product
\[
\langle f,g\rangle_w = \int_\R f(x) g(x) w(x)dx.
\]
Assume that the polynomials are dense in $L^2_w$ and that the functions $\ell(x)$ and $f(x)$ are in $L_w^2$.
Then, the option price has the following series representation
\begin{equation} \label{eq:pif}
\pi_f = \sum_{n\ge 0} f_n\ell_n
\end{equation}
with \textit{likelihood coefficients}
\begin{equation} \label{eq:elln}
\ell_n=\langle \ell,H_n\rangle_w=\int_\R H_n(x) g(x)dx
\end{equation}
and \textit{payoff coefficients}
\begin{equation} \label{eq:fn}
f_n = \langle f, H_n\rangle_w = \int_\R f(x) H_n(x) w(x)dx,
\end{equation}
and where $H_0=1, H_1, H_2,\dots $, denotes an orthonormal polynomial basis of $L^2_w$ such that $\deg H_n(x)= n$ and $\langle H_m,H_n\rangle_w=1$ if $m=n$ and zero otherwise.
In practice we approximate the option price by truncating the option price series in~\eqref{eq:pif} as follows
\begin{equation} \label{eq:piN}
\pi_f^{(N)} = \sum_{n=0}^{N} f_n \ell_n
\end{equation}
for some positive integer $N$.
The accuracy of this approximation crucially depends on the statistical distance between the true density and the auxiliary density. Indeed, as $\ell_0=1$, the $L^2$-divergence of $g$ from $w$ equals $\| \ell-1\|_w^2 = \sum_{n\ge 1}\ell_n^2$. For example if $w(x)=g(x)$ then $\pi_f=f_0\ell_0=f_0$, as $\ell_n=0$ for all $n\ge 1$ in this case.
The efficiency of this approach also depends on how fast the operations required to compute the coefficients in~\eqref{eq:elln}~and~\eqref{eq:fn} can be performed.

\subsection{Auxiliary Mixture Density}

We now assume that the auxiliary density $w(x)$ is a mixture density of the form
\[
w(x) := \sum_{k=1}^K c_k v_k(x)
\]
for some mixture weights $c_k>0$ satisfying $\sum_{k=1}^K c_k=1$, and some mixture components $v_k(x)$ which are also probability densities.
To each density $v_k(x)$ is associated an orthonormal polynomial basis $H_n^k(x)$, $n\ge0$.
Let $a_n^k$ and $b_n^k$ denote the coefficients that define the three term recurrence relation for $H_n^k(x)$, which always holds for univariate orthonormal polynomial basis, see e.g.\ \cite[Section~1.3]{gau_04},
\begin{equation} \label{eq:ONBkrec}
x H_{n}^k(x) = b_{n+1}^k H_{n+1}^k(x) + a_n^k H_{n}^k(x) + b_n^k H_{n-1}^k(x)
\end{equation}
for all $n\ge 0$ with $H_{-1}^k=0$ and $H_{0}^k=1$.
We define the following tridiagonal matrices
\[
J^k_N = \begin{pmatrix}
a_0^k & b_{1}^k & & &  \\
b_{1}^k & a_1^k & b_2^k & & \\
 & \ddots & \ddots & \ddots & \\
 & & b_{N-1}^k & a_{N-1}^k & b_N^k\\
 & & & b_N^k & a_N^k
\end{pmatrix}, \quad \text{for $k=1,\dots,K$.}
\]
The recurrence relation~\eqref{eq:ONBkrec} used to construct the orthonormal polynomial basis is explicitly known for many densities taking values both on compact and unbounded supports.
When $v_k(x)$ is a Gaussian density with mean $\mu_k$ and variance $\sigma_k^2$, the coefficients in~\eqref{eq:ONBkrec} are given by $a^k_n = \mu_k$ and  $b_n^k = \sqrt{n} \sigma_k$.
We refer to \cite[Chapter 1]{schoutens2012stochastic} for an overview of orthonormal polynomial basis families.

\begin{remark}
The coefficients $a_n^k$ and $b_n^k$ can be inferred from the basis $H^k_n(x)$.
Denote $\alpha^k_{n,i}$ the coefficient in front of the monomial $x^i$ of the polynomial $H^k_n(x)$.
Then by inspection of the recurrence relation~\eqref{eq:ONBkrec} we have
\[
b^k_{n+1} = \frac{\alpha^k_{n,n}}{\alpha^k_{n+1,n+1}} \quad \text{and} \quad a^k_n = \frac{\alpha^k_{n,n-1} - b^k_{n+1}\alpha^k_{n+1,n}}{\alpha^k_{n,n}}.
\]
\end{remark}

The following proposition gives an algorithm to compute the coefficients $a_n$ and $b_n$ in the recursion of the orthonormal polynomial basis $H_n(x)$ associated with the mixture density $w(x)$,
\[ x H_{n}(x) = b_{n+1} H_{n+1}(x) + a_n H_{n}(x) + b_n H_{n-1}(x)\]
for all $n\ge 0$ with $H_{-1}=0$ and $H_{0}=1$.

\begin{proposition} \label{prop:ONBmix}
The recurrence relation coefficients of the mixture density $w(x)$ are given by $b_n = \sqrt{\psi_n/\psi_{n-1}}$ for $n=1,\dots,N$ and  $a_n=\phi_n/ \psi_n$ for $n=0,\dots,N-1$ where
\begin{align*}
\psi_n &= \sum_{k=1}^K c_k \,(z_n^k)^\top \, z^k_n\\
\phi_n &= \sum_{k=1}^K c_k \, (z_n^k)^\top \, J_N^k \, z^k_n\\
z_{n+1}^k &= (J_N^k - a_n I)z^k_n - (b_n)^2 z^k_{n-1} \quad \text{for all $k=1,\dots,K$}
\end{align*}
with $z_{-1}^k=0$, and $z_0^k=\e_1$ is the vector whose first coordinate is equal to one and zero otherwise.
\end{proposition}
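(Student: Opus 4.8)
The plan is to tie all three objects $\psi_n$, $\phi_n$, and $z_n^k$ to the \emph{monic} orthogonal polynomials $p_0,\dots,p_N$ of the mixture density $w$, namely the unique polynomials with $\deg p_n=n$, unit leading coefficient, and $\langle p_m,p_n\rangle_w=0$ for $m\ne n$ (these exist because $w$ has finite moments and infinitely many points of support, as is already presupposed for the basis $H_n$). First I would \emph{define} $z_n^k$ to be the coordinate vector of $p_n$ in the orthonormal basis $\{H_0^k,\dots,H_N^k\}$ of $L^2_{v_k}$, so that $p_n=\sum_{i=0}^N (z_n^k)_i H_i^k$. Since $p_0=1=H_0^k$ and $p_{-1}=0$, this matches the initial data $z_0^k=\e_1$ and $z_{-1}^k=0$.

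The next step is to read off the scalar identities $\psi_n=\lVert p_n\rVert_w^2$ and $\phi_n=\langle x\,p_n,p_n\rangle_w$, and this is where the mixture structure does the work. For any polynomial $q=\sum_i z_i H_i^k$ the orthonormality of the component basis gives $\langle q,q\rangle_{v_k}=z^\top z$, while the component recurrence~\eqref{eq:ONBkrec} shows that $J_N^k$ is exactly the matrix of multiplication by $x$ in the basis $\{H_i^k\}$, whence $\langle x\,q,q\rangle_{v_k}=z^\top J_N^k z$. Summing these against $c_k$ and using $w=\sum_k c_k v_k$ then yields $\psi_n=\lVert p_n\rVert_w^2$ and $\phi_n=\langle x\,p_n,p_n\rangle_w$.

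Given these, the classical monic three-term recurrence $p_{n+1}=(x-\alpha_n)p_n-\beta_n p_{n-1}$ finishes the argument. Its coefficients are $\alpha_n=\langle x\,p_n,p_n\rangle_w/\lVert p_n\rVert_w^2=\phi_n/\psi_n$ and $\beta_n=\lVert p_n\rVert_w^2/\lVert p_{n-1}\rVert_w^2=\psi_n/\psi_{n-1}$, so setting $a_n=\alpha_n=\phi_n/\psi_n$ gives the diagonal coefficient; normalizing via $H_n=p_n/\lVert p_n\rVert_w$ then shows $b_n=\lVert p_n\rVert_w/\lVert p_{n-1}\rVert_w=\sqrt{\psi_n/\psi_{n-1}}$, hence $\beta_n=b_n^2$. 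Writing $p_{n+1}=(x-a_n)p_n-b_n^2 p_{n-1}$ in coordinates---once more using that $J_N^k$ implements multiplication by $x$---turns it into $z_{n+1}^k=(J_N^k-a_n I)z_n^k-(b_n)^2 z_{n-1}^k$, closing the induction.

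The one point needing genuine care is the truncation built into $J_N^k$, whose last row drops the term $b_{N+1}^k H_{N+1}^k$: thus $J_N^k$ agrees with multiplication by $x$ only on polynomials of degree at most $N-1$. I would therefore check that the scheme never applies $J_N^k$ to a vector carrying a nonzero $H_N^k$-component. This holds, since forming $a_n$ and $z_{n+1}^k$ calls for $J_N^k z_n^k$ only when $n\le N-1$, while the degree-$N$ vector $z_N^k$ enters solely through $\psi_N$ in the formula for $b_N$. All stated identities therefore remain exact through order $N$, which is all that is claimed.
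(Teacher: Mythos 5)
Your proof is correct, but it justifies the central identities by a different mechanism than the paper. Both arguments share the same skeleton: reduce to the monic orthogonal polynomials $h_n$ of $w$, use the classical monic three-term recurrence with $\alpha_n=\langle x\,h_n,h_n\rangle_w/\|h_n\|_w^2$ and $\beta_n=\|h_n\|_w^2/\|h_{n-1}\|_w^2$, and normalize to get $a_n,b_n$. The difference lies in why $\psi_n$ and $\phi_n$ equal the needed inner products. The paper (following Fischer's construction) defines $z_n^k:=h_n(J_N^k)\,\e_1$, i.e.\ the monic polynomial evaluated at the Jacobi matrix, and invokes the Gauss quadrature rule attached to $v_k$: diagonalizing $J_N^k$ shows $\langle p,1\rangle_{v_k}=\e_1^\top p(J_N^k)\,\e_1$ exactly for all $\deg p\le 2N+1$, which covers both $h_n^2$ and $x\,h_n^2$, so no degree bookkeeping for the truncated matrix is needed — exactness through degree $2N+1$ comes for free from the spectral decomposition. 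You instead interpret $z_n^k$ as the Fourier-coefficient vector of $h_n$ in the component basis $\{H_i^k\}$, get $\psi_n$ from orthonormality and $\phi_n$ from the fact that $J_N^k$ is the compression of multiplication-by-$x$ onto $\mathrm{span}\{H_0^k,\dots,H_N^k\}$, and then must (and do) check explicitly that the truncated last row of $J_N^k$ is never invoked outside its range of validity: operator applications $J_N^k z_n^k$ occur only for $n\le N-1$, and $z_N^k$ enters only through the Euclidean norm $\psi_N$. Your route is more elementary — no eigendecomposition or quadrature rule — and makes the meaning of the vectors $z_n^k$ concrete; the paper's route buys automatic exactness and hence a shorter argument. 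The two constructions produce the same vectors, since both satisfy the same recursion with the same initial data, so the proofs are genuinely two views of one object.
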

This algorithm is fast and performs well numerically.
For example, with $N,K\sim10^2$, it takes few milliseconds on a modern CPU to construct the orthonormal basis.
There are other moments based approaches to construct the orthonormal basis, such as the Gram-Schmidt and the Mysovskikh algorithms, which are described in Appendix~\ref{sec:ONBmts}.
However these methods are typically subject to numerical problems and may be slow even for a relatively small order $N$.

The following proposition shows that the payoff coefficients with respect to the mixture density can be efficiently computed when the corresponding coefficients are known for the mixture components.

\begin{proposition}\label{prop:fnMIX}
The payoff coefficients are equal to
\begin{equation}\label{eq:fnMIX}
f_N = \langle f, H_N \rangle_{w}= \sum_{k=1}^K \sum_{n=0}^N \, c_k \, q_{N,n}^k \,f_n^k \quad \text{with }f_n^k = \langle f, H_n^k \rangle_{v_k}
\end{equation}
and where  $q^k_N\in\R^N$ is the vector representation of $H_N(x)$ in the basis $H_n^k(x)$
\begin{equation}\label{eq:Hnbasisk}
H_N(x) = \sum_{n=0}^N q_{N,n}^k \, H_n^k(x), \quad k=1,\dots,K.
\end{equation}
\end{proposition}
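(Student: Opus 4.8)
The plan is to compute $f_N=\langle f,H_N\rangle_w$ directly from its integral definition in~\eqref{eq:fn}, exploiting the linearity of the integral twice: once to split the mixture density into its components, and once to expand $H_N$ in each component's orthonormal basis. First I would write $f_N=\int_\R f(x)H_N(x)w(x)\,dx$ and substitute $w(x)=\sum_{k=1}^K c_k v_k(x)$. Since this is a finite sum, linearity immediately gives $f_N=\sum_{k=1}^K c_k\int_\R f(x)H_N(x)v_k(x)\,dx$, where each integral is well defined because $f\in L^2_{v_k}$: indeed $c_k v_k\le w$ pointwise, so $\lVert f\rVert_{v_k}^2\le \lVert f\rVert_w^2/c_k<\infty$, using the standing assumption $f\in L^2_w$ from Section~\ref{sec:expan}.

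The key algebraic step is the change of basis~\eqref{eq:Hnbasisk}. Because $\deg H_N=N$ and $\{H_0^k,\dots,H_N^k\}$ spans the space of polynomials of degree at most $N$ (as $\deg H_n^k=n$), there is a unique coordinate vector $q_N^k$ with $H_N(x)=\sum_{n=0}^N q_{N,n}^k H_n^k(x)$. Substituting this expansion into the $k$-th integral and again using linearity of the finite sum gives $\int_\R f(x)H_N(x)v_k(x)\,dx=\sum_{n=0}^N q_{N,n}^k\int_\R f(x)H_n^k(x)v_k(x)\,dx=\sum_{n=0}^N q_{N,n}^k f_n^k$, where the last equality is just the definition $f_n^k=\langle f,H_n^k\rangle_{v_k}$, each term being finite by Cauchy--Schwarz since $f,H_n^k\in L^2_{v_k}$.

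Assembling the two steps yields $f_N=\sum_{k=1}^K c_k\sum_{n=0}^N q_{N,n}^k f_n^k$, which is exactly~\eqref{eq:fnMIX}. I do not expect any genuine analytic obstacle here: every interchange of sum and integral is over a finite index set, and the only integrability input is the elementary domination $c_k v_k\le w$ together with Cauchy--Schwarz. The single point that deserves an explicit word is the change-of-basis step, which is purely linear-algebraic and rests on the fact that $H_0^k,\dots,H_N^k$ have distinct degrees; in practice the coefficients $q_{N,n}^k$ can be obtained by expressing both $H_N$ and the $H_n^k$ in the monomial basis and solving the resulting triangular system, or recursively from the three-term recurrences.
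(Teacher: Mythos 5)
Your proposal is correct and follows essentially the same route as the paper's proof: split the mixture $w=\sum_k c_k v_k$ by linearity, expand $H_N$ in each component basis via~\eqref{eq:Hnbasisk}, and identify the resulting integrals as the coefficients $f_n^k$. The only difference is that you additionally justify the integrability (via the domination $c_k v_k\le w$ and Cauchy--Schwarz) and the existence and uniqueness of the change-of-basis coefficients, details the paper leaves implicit.
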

The usefulness of Proposition~\ref{prop:fnMIX} lies on the premise that the coefficients $f_n^k$ can be easily computed.
In a situation where they are numerically costly to compute then one may directly integrate the payoff function with respect to the following density approximation
\begin{equation}\label{eq:densapp}
w^{(N)}(x) =  \sum_{n=0}^N \ell_n H_n(x) \sum_{k=1}^K c_k v_k(x).
\end{equation}

\begin{remark}
The vectors $q^{k}_N$ can be efficiently computed.
Let $\mathbf{H}^k_N\in\R^{(N+1)\times(N+1)}$ denote the matrix whose $(i,j)$-th element is given by the coefficient in front of the monomial $x^{j-1}$ in the $(i-1)$-th polynomial of the basis $H_n^k(x)$.
Define similarly the matrix $\mathbf{H}_N$ with respect to the polynomial basis $H_n(x)$.
The matrices $\mathbf{H}^k_N$ for $k=1,\dots,K$, and $\mathbf{H}_N$ are upper triangular.
We are interested in the upper triangular matrix $\mathbf{Q}^k_N\in\R^{(N+1)\times(N+1)}$ for $k=1,\dots,K$ whose $(i,j)$-th element is equal to $q^k_{j,i}$.
It is also given by
$
\mathbf{H}_N = \mathbf{H}_N^k \, \mathbf{Q}_N^k
$
which forms a triangular system of equations and can thus be solved efficiently.
\end{remark}

\subsection{Examples of Mixture Components}

In practice, to ensure efficient option prices approximations we need to select mixture components $v_k(x)$ whose orthonormal polynomial bases $H_n^k(x)$ and payoff coefficients $f_n^k$ can be efficiently computed.
The Gaussian and uniform distributions have been successfully used in~\cite{ackerer2018jacobi} and in~\cite{ackerer2016linear} respectively.
The logistic distribution whose tail decreases at an exponential-linear rate is used in \cite{heston2016spanning}. But the corresponding payoff coefficients are given by complicated expressions involving special functions.
The bilateral Gamma distribution of~\cite{kuchler2008bilateral} was used in~\cite{filipovic2013density} to accurately approximate option prices by numerical integration of the discounted payoff function with respect to the density approximation.
However an explicit recursion formula to construct the orthonormal polynomial basis was not provided.

We study hereinbelow the Gamma distribution whose tail decays at a polynomial-exponential-linear rate and for which simple recursive expressions can be derived for the payoff coefficients.
By mixing two Gamma distributions one can obtain a distribution on the entire real line.
The Gamma density on the half-line $(\xi,+\infty)$ for some $\xi\in\R$ is defined by
\begin{equation}\label{eq:Gammadens}
v_k(x) = \Ind_{\{x>\xi\}} \, \frac{\beta^\alpha}{\Gamma(\alpha)}(x-\xi)^{\alpha-1}\e^{-\beta(x-\xi)}
\end{equation}
for some shape parameter $\alpha\ge 1$, rate parameter $\beta>0$, and where $\Gamma(\alpha)$ is the upper incomplete Gamma function defined by
\[
\Gamma(\alpha) = \Gamma(\alpha,0) \quad \text{with} \quad \Gamma(\alpha,z) =\int_z^\infty x^{\alpha-1}\e^{-x}dx,
\]
so that $\Gamma(n)=(n-1)!$ for any positive integer $n$.
The Gamma density $v_k(x)$ admits an orthonormal polynomial basis $H^k_n(x)$ given by
\[
H^k_n(x) = \sqrt{\frac{n!}{\Gamma(\alpha+n)}} \, \Lcal^{\alpha-1}_n(\beta(x-\xi))
\]
where $\Lcal^{\alpha-1}_n(x)$ denotes the $n$-th order generalized Laguerre polynomial with parameter $\alpha-1$ defined by
\[
\Lcal^{\alpha-1}_n(x) = \frac{x^{-\alpha+1}\e^x}{n!} \frac{\partial^n}{\partial x^n} (\e^{-x}x^{\alpha-1+n}).
\]
The generalized Laguerre polynomials are recursively given by
\begin{align*}
\Lcal^{\alpha-1}_0 (x) &= 1\\
\Lcal^{\alpha-1}_1 (x) &= \alpha + x\\
\Lcal^{\alpha-1}_{n+1} (x) &= \frac{2n+\alpha-x}{n+1} \, \Lcal^{\alpha-1}_n(x) - \frac{n+\alpha-1}{n+1} \, \Lcal^{\alpha-1}_{n-1}(x) \quad \text{for all $n\ge 1$.}
\end{align*}

The following proposition shows that the payoff coefficients $f^k_n$ of two specific discounted payoff functions can be recursively computed for the Gamma distribution $v_k(x)$.
\begin{proposition}\label{PROP:lag:coefs}
\begin{enumerate}
\item\label{PROP:lag:coefs1} Consider the discounted payoff function of a call option with log strike ${\bar k}$,
\[
f(x) = \e^{-rT} \left( \e^x - \e^{\bar k} \right)^+.
\]
Its payoff coefficients $f_n$ are given by
\begin{align*}
f_n &= \e^{-rT} \sqrt{\frac{{n!}}{\Gamma(\alpha+n) }}\frac{1}{\Gamma(\alpha)} \left(\e^{\xi} I_n^{\alpha-1}\left(\mu ; \beta^{-1} \right) + \e^{\bar k}I_n^{\alpha-1}\left( 0 ; \beta^{-1} \right) \right) ,\quad n\ge0
\end{align*}
with $\mu=\max(0,\beta({\bar k}-\xi))$ and where the functions $I^{\alpha-1}_n(\mu;\nu)$ are recursively defined by
\begin{equation}\label{eq:Inrecur}
\begin{split}
I^{\alpha-1}_0(\mu;\nu) &= (1-\nu)^{-\alpha} \, \Gamma(\alpha) \,  \Gamma(\alpha,\mu(1-\nu)\\
I^{\alpha-1}_1(\mu;\nu) &= \alpha I_0^{\alpha-1}(\mu;\nu) + I_0^{\alpha}(\mu;\nu)\\
I^{\alpha-1}_n(\mu;\nu)  &= \left(2+\frac{\alpha - 2}{n} \right)I_{n-1}^{\alpha-1}(\mu;\nu) - \left(1 + \frac{\alpha - 2}{n} \right)I_{n-2}^{\alpha-1}(\mu;\nu)  \\
& \quad - \frac{1}{n}\left(I_{n-1}^{\alpha}(\mu;\nu) -I_{n-2}^{\alpha}(\mu;\nu) \right),\quad n\ge 2.
\end{split}.
\end{equation}

\item\label{PROP:lag:coefs2} Consider the discounted payoff function with strike ${\bar k}$,
\[
f(x) = \e^{-rT} \left( x - {\bar k} \right)^+.
\]
Its payoff coefficients $f_n$ are given by
\begin{align*}
f_n = \e^{-rT}  \sum_{j=0}^{n+1} \; p_{n,1+j} \, \frac{\Gamma(\alpha+j,\beta({\bar k}-\xi))}{\Gamma(\alpha) \beta^{j}}, \quad n \ge 0
\end{align*}
where $p_{n}\in\R^{n+2}$ is the vector representation of $x H_n^k(x+\xi) +(\xi-{\bar k})H^k_n(x+\xi)$ in the monomial basis.
\end{enumerate}
\end{proposition}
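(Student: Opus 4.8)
The plan is to reduce each payoff-coefficient integral to a standard Laguerre-weight integral and then propagate the answer through the three-term recurrence. For part~\ref{PROP:lag:coefs1} I would begin from $f_n = \langle f, H_n^k\rangle_{v_k}$, insert the call payoff together with $H_n^k(x) = \sqrt{n!/\Gamma(\alpha+n)}\,\Lcal_n^{\alpha-1}(\beta(x-\xi))$, and note that the integrand vanishes off $\{x > \max(\xi,\bar k)\}$. The substitution $y = \beta(x-\xi)$ turns $v_k(x)\,dx$ into $\Gamma(\alpha)^{-1}y^{\alpha-1}\e^{-y}\,dy$ on $(\mu,\infty)$ with $\mu = \max(0, \beta(\bar k - \xi))$, and separates the payoff into its $\e^x = \e^\xi\e^{y/\beta}$ part and its constant $\e^{\bar k}$ part. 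Each piece is then an integral of the form $I_n^{\alpha-1}(\mu;\nu) := \int_\mu^\infty \e^{\nu y}\,\Lcal_n^{\alpha-1}(y)\,y^{\alpha-1}\e^{-y}\,dy$, with $\nu = \beta^{-1}$ for the exponential part and $\nu = 0$ for the constant part, and collecting the prefactor $\e^{-rT}\sqrt{n!/\Gamma(\alpha+n)}\,\Gamma(\alpha)^{-1}$ assembles the claimed closed form.

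The substantive work is the recursion for $I_n^{\alpha-1}(\mu;\nu)$. The base cases come from direct integration: $\Lcal_0^{\alpha-1} = 1$ gives $\int_\mu^\infty y^{\alpha-1}\e^{-(1-\nu)y}\,dy$, which becomes an upper incomplete Gamma function after rescaling $y$ by $1-\nu$, and $\Lcal_1^{\alpha-1}$ splits $I_1$ into an $I_0^{\alpha-1}$ term and an $I_0^{\alpha}$ term. For $n\ge 2$ I would multiply the three-term recurrence for $\Lcal_n^{\alpha-1}$ by $\e^{\nu y}y^{\alpha-1}\e^{-y}$ and integrate over $(\mu,\infty)$: the $\Lcal_{n-1}^{\alpha-1}$ and $\Lcal_{n-2}^{\alpha-1}$ contributions reproduce $I_{n-1}^{\alpha-1}$ and $I_{n-2}^{\alpha-1}$ with the coefficients $2+(\alpha-2)/n$ and $1+(\alpha-2)/n$. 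The only obstruction is the term carrying the extra factor $y$ coming from the $-x$ in the recurrence, which raises the weight to $y^{\alpha}$; I expect this to be the main difficulty. It is resolved by the contiguous identity $\Lcal_m^{\alpha-1} = \Lcal_m^{\alpha} - \Lcal_{m-1}^{\alpha}$, which, applied to $\Lcal_{n-1}^{\alpha-1}$ under the $y^{\alpha}$-weighted integral, rewrites that term exactly as $I_{n-1}^{\alpha} - I_{n-2}^{\alpha}$. This is precisely what synchronizes the Laguerre parameter with the monomial weight and closes the recursion, producing the stated $-\tfrac1n(I_{n-1}^{\alpha} - I_{n-2}^{\alpha})$ after dividing by $n$.

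Part~\ref{PROP:lag:coefs2} is more elementary and I would treat it by a single shift of variables. Setting $u = x-\xi$ turns $v_k$ into the standard Gamma weight $\tfrac{\beta^\alpha}{\Gamma(\alpha)}u^{\alpha-1}\e^{-\beta u}$ on $(0,\infty)$ and the payoff into $(u+\xi-\bar k)^+$, supported on $u > \bar k - \xi$. On that region the integrand is the polynomial $(u+\xi-\bar k)\,H_n^k(u+\xi)$ — whose monomial coefficients are by definition the entries of $p_n$ — times the Gamma weight, so expanding in monomials and integrating termwise reduces every contribution to $\int_{\bar k - \xi}^\infty u^{\alpha+j-1}\e^{-\beta u}\,du$. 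Rescaling $u$ by $\beta$ identifies this with $\beta^{-(\alpha+j)}\Gamma(\alpha+j,\beta(\bar k-\xi))$, and collecting the constant $\beta^\alpha/\Gamma(\alpha)$ gives the stated sum; there is no real obstacle here beyond the polynomial expansion that $p_n$ already encodes.
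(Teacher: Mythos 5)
Your proof is correct and follows essentially the same route as the paper's: the same change of variable $y=\beta(x-\xi)$ reducing the call coefficients to the integrals $I_n^{\alpha-1}(\mu;\nu)$, the same combination of the Laguerre three-term recurrence with the contiguous identity $\Lcal_m^{\alpha-1}=\Lcal_m^{\alpha}-\Lcal_{m-1}^{\alpha}$ to absorb the extra factor of $y$ and close the recursion, and the same shift-and-expand termwise integration against upper incomplete Gamma functions for part (ii). Incidentally, your base case $I_0^{\alpha-1}(\mu;\nu)=(1-\nu)^{-\alpha}\,\Gamma\bigl(\alpha,\mu(1-\nu)\bigr)$ is the correct one; the formula displayed in the proposition's statement contains a typo (a spurious factor $\Gamma(\alpha)$ and an unbalanced parenthesis).
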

Note that the calculation of the term $I^{\alpha-1}_{n}(\mu;\nu)$ in part~{\ref{PROP:lag:coefs1}} requires calculations of the terms $I^{\alpha}_{n-1}(\mu;\nu)$ and therefore the dimension of the recursive system grows at the rate $n^2$. Part~{\ref{PROP:lag:coefs2}} will be used to price realized variance derivatives, see example in Section~\ref{sec:num_pricing}.

The Gamma distribution on the half-line $(-\infty, \xi)$ for some $\xi\in\R$ together with its polynomial basis and Fourier coefficients can be similarly derived.
A Gamma mixture on $\R$ can thus be constructed when $K\ge2$ by letting, for example, the first component support be $(-\infty,\xi]$ and the second $[\xi,\infty)$.
In particular, the coefficients in~\eqref{eq:ONBkrec} when $v_k(x)$ is a Gamma density are given by
\begin{equation}\label{eq:gammix_coefs}
a^k_n = \frac{\delta 2n + \delta \alpha + \beta \xi}{\beta},  \quad n\ge 0,
\quad
\text{ and }
\quad
b^k_n = -\frac{\delta \sqrt{(n + \alpha -1) n}}{\beta},\quad n\ge 1,
\end{equation}
where $\delta=1$ if the domain of the Gamma density is $(\xi,\infty)$ and $\delta=-1$ if its domain is $(-\infty, \xi)$.
The derivation of~\eqref{eq:gammix_coefs} is provided in Appendix~\ref{sec:MixProofs}.

\section{Polynomial Stochastic Volatility Models} \label{sec:polSV}

We present a class of stochastic volatility models for which the log price density $g(x)$ is given by a Gaussian mixture with an infinite number of components.
We show that the option price sensitivities, the Greeks, also have series representations whose terms can be computed.
We then describe a simple method to approximate the log price density by an auxiliary Gaussian mixture $w(x)$ with a finite number of components.
The section terminates by studying the option price approximation error when the likelihood ratio $\ell(x)=g(x)/w(x)$ does not belong to the weighted space $L_w^2$.

\subsection{Definition and Basic Properties}

We fix a stochastic basis $(\Omega,\Fcal,\Fcal_t,\Q)$ where $\Q$ is a risk-neutral measure, carrying a multivariate Brownian motion $W_{t}$. We first recall the notion and basic properties of a polynomial diffusion, see \cite{filipovic2017polynomial}. For $d,n\in\N$ and a state space $E\subseteq\R^d$, denote by ${\rm Pol}_n(E)$ the linear space of polynomials on $E$ of degree $n$ or less. Consider an $E$-valued {\emph{polynomial}} diffusion
\[ dZ_t = b(Z_t)\,dt +\sigma(Z_t)\,dW_t \]
in the sense that its generator $\Gcal = b^\top\nabla + (1/2) {\rm Tr}(\sigma\sigma^\top\nabla^2)$ maps ${\rm Pol}_n(E)$ to ${\rm Pol}_n(E)$ for all $n\in\N$. It can easily be verified that this polynomial property holds if and only if $b\in{\rm Pol}_1(E)$ and $a\in{\rm Pol}_2(E)$ componentwise, see \cite[Lemma~2.2]{filipovic2017polynomial}. Fix a degree $n$ and a basis $q_0,\dots,q_M$ of ${\rm Pol}_n(E)$, where $M+1=\dim{\rm Pol}_n(E)$, and write $Q(z)=(q_0(z),\dots,q_M(z))^\top$. Then $\Gcal$ restricted to ${\rm Pol}_n(E)$ has a matrix representation $G$, so that $\Gcal p(z)=Q(z)^\top G\vec p$ for any polynomial $p\in{\rm Pol}_n(E)$ and where $\vec p$ denotes its vector representation. A very useful consequence is that the conditional moments of $Z_T$ are given in analytic form by the moment formula
\begin{equation}\label{momentfor}
  \E[p(Z_T)\mid\Fcal_t] = Q(Z_t)^\top\e^{G(T-t)}\vec p ,
\end{equation}
see \cite[Theorem 2.4]{filipovic2017polynomial}.

Now let $Y_t$ be a univariate polynomial diffusion of the form
\begin{equation} \label{eq:sdeY}
dY_t = \kappa(\theta - Y_t)\,dt + \sigma(Y_t) \,dW_{1t} \quad \text{with} \quad \sigma(y)^2 = \alpha + a y + A y^2
\end{equation}
for some real parameters $\kappa,\theta,\alpha,a,A$.
We then define the dynamics of the log price $X_t$ as follows
\begin{equation*}\label{eq:sdeX}
dX_t = \left(r-\delta \right)dt
- \frac{1}{2} d\langle X \rangle_t  + \Sigma_1(Y_t)\, dW_{1t} + \Sigma_2(Y_t)\, dW_{2t}
\end{equation*}
for the interest rate $r$, the dividend rate $\delta\ge0$, and such that
$
\Sigma_1^2 + \Sigma_2^2 \in {\rm Pol}_m(\R)$ and $ \Sigma_1 \sigma \in {\rm Pol}_{m+1}(\R)$ for some $m\in\N$.
The initial values $(X_0,Y_0) \in \R^2$ are deterministic. It can be verified that the $\R^{m+1}$-valued diffusion $Z_t=(X_t,Y_t,Y_t^2,\dots,Y_t^{m})$ has the polynomial property on $E=\{z=(x,y,y^2,\dots,y^m)\mid  (x,y)\in\R^2\}$, see~\cite[Theorem~5.5]{filipovic2017polynomial}.
Fix the order $N$, the moment formula~\eqref{momentfor} implies that the payoff coefficients in~\eqref{eq:elln} can be computed explicitly as follows
\begin{equation} \label{eq:fnPP}
\ell_n = \E[H_n(X_T)] = Q(Z_0)^\top\e^{G\, T} \vec{H_n}, \quad n=0,\dots,N
\end{equation}
for some chosen polynomials $Q(z)=(q_0(z),\dots,q_M(z))^\top$ that form a basis of ${\rm Pol}_N(E)$. As above, $G$ is the matrix representation of the generator of $Z_t$ and $\vec{H_n}$ is the vector representation of the polynomial $H_n(x)$ in this basis.
Some classical stochastic volatility models are nested in this setup, such as the Heston, Jacobi, Stein--Stein, and Hull--White models.

The spot variance of the log return $dX_t$ is $V_t =d\langle X\rangle_t/dt=\Sigma_1(Y_t)^2 + \Sigma_2(Y_t)^2$.
The leverage effect refers to the generally negative correlation between $dV_t$ and $dX_t$ and is given by
\begin{equation} \label{eq:VXcor}
\begin{aligned}
{\rm lev}(X_t) &= \frac{d\langle V,X\rangle_t}{\sqrt{d\langle V \rangle_t}\;\sqrt{d\langle X\rangle_t}} \\
&= \frac{\Sigma_1(Y_t)}{\sqrt{\Sigma_1(Y_t)^2 + \Sigma_2(Y_t)^2}} \sign\left[\left(\Sigma'_1(Y_t)\Sigma_1(Y_t) + \Sigma'_2(Y_t)\Sigma_2(Y_t)\right)\sigma(Y_t)\right].
\end{aligned}
\end{equation}
The volatility of the volatility of the log return is \begin{equation} \label{eq:volvol}
\begin{aligned}
{\rm volvol}(X_t) &= \sqrt{\frac{d\langle \sqrt{V}\rangle_t}{dt}} = \frac{1}{2 \sqrt{V_t}}\sqrt{\frac{d\langle V\rangle_t}{dt}} =  \sqrt{ \frac{\left(\Sigma'_1(Y_t)\Sigma_1(.Y_t) + \Sigma'_2(Y_t)\Sigma_2(Y_t)\right)^2\sigma(Y_t)^2}{\Sigma_1(Y_t)^2 + \Sigma_2(Y_t)^2}}
\end{aligned}
\end{equation}
The proofs of Equations~\eqref{eq:VXcor} and \eqref{eq:volvol} are given in Appendix~\ref{sec:MixProofs}.

We fix a finite time horizon $T>0$. The following proposition shows that the distribution of the log price is given by a Gaussian mixture density with an infinite number of components, see \cite[Chapter~6.2]{mcneil2015quantitative}.

\begin{proposition}\label{prop:condGM}
The distribution of $X_T$ conditional on the trajectory of $W_{1t}$ on $[0,T]$ is normally distributed with mean
\begin{equation} \label{eq:MT}
M_T = X_0 + (r-\delta)T - \frac{1}{2} \int_0^T \left(\Sigma_1(Y_t)^2 + \Sigma_2(Y_t)^2\right)dt + \int_0^T \Sigma_1(Y_t)dW_{1t}
\end{equation}
and variance
\begin{equation} \label{eq:CT}
C_T = \int_0^T \Sigma_2(Y_t)^2dt.
\end{equation}
\end{proposition}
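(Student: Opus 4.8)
The plan is to integrate the dynamics of $X_t$ over $[0,T]$ and then condition on the whole trajectory of $W_{1t}$, exploiting that $Y_t$ is driven by $W_{1t}$ alone while the residual randomness enters only through the independent Brownian motion $W_{2t}$. First I would use $d\langle X\rangle_t = (\Sigma_1(Y_t)^2+\Sigma_2(Y_t)^2)\,dt$ (valid because $W_{1t}$ and $W_{2t}$ are independent, so $d\langle W_1,W_2\rangle_t=0$) to write
\[
X_T = X_0 + (r-\delta)T - \frac{1}{2}\int_0^T\!\big(\Sigma_1(Y_t)^2+\Sigma_2(Y_t)^2\big)\,dt + \int_0^T \Sigma_1(Y_t)\,dW_{1t} + \int_0^T \Sigma_2(Y_t)\,dW_{2t}.
\]
Set $\mathcal{G} = \sigma(W_{1s}: 0\le s\le T)$.

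Second, I would observe that $Y_t$ is $\mathcal{G}$-measurable for every $t\in[0,T]$: since \eqref{eq:sdeY} is driven by $W_{1t}$ only, its solution is adapted to the filtration generated by $W_{1t}$, so conditioning on the $W_{1t}$-path fixes the entire path $t\mapsto Y_t$ and hence the functions $t\mapsto\Sigma_i(Y_t)$. Consequently the first four terms on the right-hand side — including the It\^o integral $\int_0^T\Sigma_1(Y_t)\,dW_{1t}$, which is a functional of the $W_{1t}$-path — are all $\mathcal{G}$-measurable, and their sum is exactly $M_T$ of \eqref{eq:MT}. The entire conditional randomness of $X_T$ is therefore carried by the single term $\int_0^T \Sigma_2(Y_t)\,dW_{2t}$.

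Third, I would establish the conditional Gaussianity of this last term. Because $W_{2t}$ is independent of $W_{1t}$, conditioning on $\mathcal{G}$ leaves $W_{2t}$ a Brownian motion while rendering the integrand $\Sigma_2(Y_t)$ a fixed deterministic function of $t$; the integral then becomes a Wiener integral of a deterministic $L^2$ function, hence Gaussian with mean zero and variance $\int_0^T\Sigma_2(Y_t)^2\,dt = C_T$. Rigorously I would verify this through the conditional characteristic function
\[
\E\!\left[\exp\!\left(\im u\int_0^T\Sigma_2(Y_t)\,dW_{2t}\right)\,\Big|\,\mathcal{G}\right]
= \exp\!\left(-\frac{u^2}{2}\int_0^T\Sigma_2(Y_t)^2\,dt\right),
\]
obtained by approximating $\Sigma_2(Y_\cdot)$ with $\mathcal{G}$-measurable simple integrands and passing to the limit; the right-hand side is the characteristic function of $\mathcal{N}(0,C_T)$. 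Together with the $\mathcal{G}$-measurability of $M_T$ this yields $X_T\mid\mathcal{G}\sim\mathcal{N}(M_T,C_T)$.

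The main obstacle is making the conditioning argument fully rigorous, namely justifying that conditioning on $\mathcal{G}$ turns the adapted random integrand $\Sigma_2(Y_t)$ into a deterministic function and that the Wiener-integral Gaussianity survives the conditioning. This rests on the independence of $W_{1t}$ and $W_{2t}$ together with the standard approximation of the stochastic integral by $\mathcal{G}$-measurable simple processes; one also needs $\E\big[\int_0^T\Sigma_2(Y_t)^2\,dt\big]<\infty$, so that $C_T<\infty$ almost surely and the integral is well defined, which follows from $\Sigma_2^2\in\Pol_m(\R)$ and the finiteness of all polynomial moments of $Y_t$ guaranteed by the polynomial property.
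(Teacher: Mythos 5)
Your proposal is correct and follows essentially the same route as the paper's proof: conditioning on the trajectory of $W_{1t}$ makes the path of $Y_t$ (hence $\Sigma_1(Y_t)$, $\Sigma_2(Y_t)$ and the $dW_{1t}$ integral) known, so the only remaining randomness is the $W_{2t}$ integral, which is conditionally Gaussian with mean zero and variance $C_T$. The paper states this in two sentences without the measurability details or the conditional characteristic function computation; your write-up is simply a more rigorous rendering of the identical argument.
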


Therefore, when it is well defined, the log price density is of the form
\begin{equation} \label{eq:gGMdens}
g(x) = \E\left[ (2\pi C_T)^{-\frac{1}{2}} \exp\left(-\frac{(x - M_T)^2}{2C_T}\right) \right].
\end{equation}
Similar expressions for the log price density have been previously derived in~\cite{lipton2008stochastic} and in~\cite{glasserman2011gamma}.
In practice, we want to approximate the log price density $g(x)$ by a Gaussian mixture density $w_K(x)$ with $K$ components that will in turn be used as auxiliary density to derive option price approximations,
\begin{equation} \label{eq:GM}
w_K(x) = \sum_{k=1}^K \; c_k \,  \frac{1}{\sqrt{2\pi \sigma_{k}^2}} \exp\left(-\frac{(x - \mu_k)^2}{2\sigma_k^2}\right)
\end{equation}
for some mixture weights $c_k>0$ such that $\sum_{k=1}^K c_k=1$, and some constants $\mu_k\in\R$ and $\sigma_k>0$ for $k=1,\dots,K$.
In Section~\ref{sec:GMcons} we suggest a computationally efficient approach based on weighted simulations of the first Brownian motion $W_{1t}$.

We now give conditions such the likelihood ratio $\ell(x)$ belongs to the weighted Lebesgue space $L_{w_K}^2$.

\begin{proposition}\label{prop:GMparam}
Assume that there exist two constants $K_1,K_2>0$ such that
\begin{equation}\label{eq:MCbounded}
|M_T|<K_1 \quad \text{and} \quad C_T<K_2.
\end{equation}
If $\sigma_k > K_2/2$ for some $k$ in $1,\dots,K$ then $\ell \in L_{w_K}^2$ with $w_K$ as defined in~\eqref{eq:GM}.
Property~\eqref{eq:MCbounded} holds if $Y_t$ takes values in some compact interval $I\subset\R$ for all $t\ge 0$ and $\Sigma_1(y) =\phi(y) \sigma(y) $ for $y\in I$ for some $\phi\in C^1(I)$.
\end{proposition}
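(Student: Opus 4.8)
The plan is to prove the two assertions in turn: first the membership $\ell\in L^2_{w_K}$ under the moment bounds, then the sufficiency of the compactness-plus-factorization condition. For the first part, observe that $\|\ell\|_{w_K}^2=\int_\R \ell(x)^2 w_K(x)\,dx=\int_\R g(x)^2/w_K(x)\,dx$, so it suffices to bound this last integral. Lower-bounding the mixture by its $k$-th component, $w_K(x)\ge c_k\,\varphi_k(x)$ with $\varphi_k(x)=(2\pi\sigma_k^2)^{-1/2}\e^{-(x-\mu_k)^2/(2\sigma_k^2)}$, reduces the task to showing $\int_\R g(x)^2/\varphi_k(x)\,dx<\infty$. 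Writing $g(x)=\E[\rho(x;M_T,C_T)]$ with $\rho(x;m,c)=(2\pi c)^{-1/2}\e^{-(x-m)^2/(2c)}$, I would first control the tails by a deterministic Gaussian envelope.

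Using $|M_T|<K_1$ and $C_T<K_2$, for $|x|>K_1+\sqrt{K_2}$ one has $\rho(x;M_T,C_T)\le(2\pi K_2)^{-1/2}\e^{-(|x|-K_1)^2/(2K_2)}$ (the worst case is $M_T=\sign(x)K_1$ and $C_T=K_2$, since $c\mapsto\rho(x;m,c)$ is increasing on $(0,K_2]$ for such $x$), hence $g$ obeys the same bound there. Consequently $g(x)^2/\varphi_k(x)$ is dominated, up to a constant, by $\exp\!\big(-(|x|-K_1)^2/K_2+(x-\mu_k)^2/(2\sigma_k^2)\big)$, whose exponent has leading coefficient $-1/K_2+1/(2\sigma_k^2)$. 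This is negative precisely when $2\sigma_k^2>K_2$, which is the quantitative content of the hypothesis that some component variance $\sigma_k^2$ exceeds $K_2/2$, so the tail integral converges.

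On the bounded region $|x|\le K_1+\sqrt{K_2}$, $\varphi_k$ is bounded below by a positive constant, so it remains to establish local square-integrability of $g$. By Jensen's inequality $g(x)^2\le\E[\rho(x;M_T,C_T)^2]$ and Tonelli, the local integral is bounded by a constant times $\E[C_T^{-1/2}]$. This is the step I expect to be the main obstacle: the upper bound $C_T<K_2$ alone does not control the degeneracy $C_T\downarrow0$, under which the conditional Gaussians become arbitrarily peaked and the envelope argument fails. I would close this gap with a lower bound on the conditional variance: in the setting of the second part, $C_T=\int_0^T\Sigma_2(Y_t)^2\,dt\ge T\,\min_{y\in I}\Sigma_2(y)^2$, which is strictly positive whenever $\Sigma_2$ does not vanish on $I$, so that $\E[C_T^{-1/2}]<\infty$; more generally any bound of the form $\E[C_T^{-1/2}]<\infty$ suffices, and this should be recorded as the genuine analytic requirement behind the clean tail condition.

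For the second part I would verify the two bounds separately. Since $\Sigma_2^2=(\Sigma_1^2+\Sigma_2^2)-\Sigma_1^2$ is a polynomial, it is bounded on the compact interval $I$, so $C_T\le T\sup_{y\in I}\Sigma_2(y)^2$ and $(\ref{eq:MCbounded})$ holds for $C_T$ with $K_2$ chosen slightly above this value. For $M_T$ the only term in $(\ref{eq:MT})$ that is not manifestly bounded is the stochastic integral $\int_0^T\Sigma_1(Y_t)\,dW_{1t}$, and here the factorization $\Sigma_1=\phi\sigma$ is decisive. From $(\ref{eq:sdeY})$ we have $\sigma(Y_t)\,dW_{1t}=dY_t-\kappa(\theta-Y_t)\,dt$, so $\int_0^T\Sigma_1(Y_t)\,dW_{1t}=\int_0^T\phi(Y_t)\,dY_t-\kappa\int_0^T\phi(Y_t)(\theta-Y_t)\,dt$. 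Applying It\^o's formula to a primitive $\Phi$ of $\phi$ (which is $C^2$ since $\phi\in C^1$) turns the first integral into $\Phi(Y_T)-\Phi(Y_0)-\tfrac12\int_0^T\phi'(Y_t)\sigma(Y_t)^2\,dt$, eliminating the martingale part entirely. Every remaining term is a continuous function of $Y$ evaluated along a trajectory confined to the compact set $I$, hence uniformly bounded; this yields $|M_T|<K_1$ for a suitable constant $K_1$ and completes the argument.
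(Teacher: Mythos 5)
Your proof of the second assertion is, in substance, identical to the paper's: the paper also bounds $C_T$ by noting that $\Sigma_1^2+\Sigma_2^2$ is a polynomial and hence bounded on the compact interval $I$ (this is all that is needed; your side remark that $\Sigma_2^2$ itself is a polynomial does not follow from the standing assumptions, but is also not needed, since $\Sigma_2^2\le\Sigma_1^2+\Sigma_2^2$), and it removes the stochastic integral from $M_T$ exactly as you do, writing $\sigma(Y_t)\,dW_{1t}=dY_t-\kappa(\theta-Y_t)\,dt$ and applying It\^o's formula to a primitive $\Phi$ of $\phi$. For the first assertion the paper gives no self-contained proof: it states that it ``follows from similar arguments as in'' Theorem~3.1 of \cite{ackerer2018jacobi}, adding only that one may restrict attention to the mixture component with the largest variance parameter. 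Your lower bound $w_K\ge c_k\varphi_k$ and your Gaussian-envelope tail estimate, with threshold $2\sigma_k^2>K_2$, are precisely those arguments written out; and your reading of the hypothesis as a condition on the variance, $\sigma_k^2>K_2/2$, is the sensible one (the printed ``$\sigma_k>K_2/2$'' compares a standard deviation with a bound on a variance).

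The obstruction you isolate in the local part is genuine and is not addressed by the paper. The hypotheses $|M_T|<K_1$ and $C_T<K_2$ alone do not yield local square-integrability of $g$: if, say, $M_T\equiv 0$ and the mixing law of $C_T$ had a density behaving like $c^{-3/4}$ near $0$ (perfectly compatible with \eqref{eq:MCbounded}), then restricting the expectation defining $g$ to the event $\{x^2\le C_T\le 2x^2\}$ gives $g(x)\ge c\,|x|^{-1/2}$ near $0$, hence $\int_\R g(x)^2/w_K(x)\,dx=\infty$ no matter how large $\sigma_k$ is. So some control of small values of $C_T$ is a hidden hypothesis of the proposition; in the cited Jacobi setting it is supplied by model-specific non-degeneracy of the variance process, and your fixes ($\E[C_T^{-1/2}]<\infty$, or $\min_{y\in I}\Sigma_2(y)^2>0$) legitimately close it here. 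Two refinements you may note: writing $g(x)^2=\E[\rho(x;M_T,C_T)\,\rho(x;M_T',C_T')]$ with an independent copy $(M_T',C_T')$ and integrating against $1/\varphi_k$ handles the tails and the local region in one stroke and shows that the weaker condition $\E[C_T^{-1/4}]<\infty$ suffices; and in the setting of the second assertion the requirement $\min_{y\in I}\Sigma_2(y)^2>0$ can be relaxed to $\Sigma_2(Y_0)\neq 0$ via a small-ball (exit-time) estimate, which yields all negative moments of $C_T$. In short, your argument is correct where it is complete, coincides with the paper's where the paper gives details, and the gap you flag lies in the paper's statement and proof rather than in your approach.
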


\subsection{Greeks and Parameter Sensitivity}

The option Greeks are computed by differentiating the option prices with respect to specific model parameters.
For the sensitivity analysis we fix the auxiliary density $w(x)$, hence the basis $H_n(x)$ and the coefficients $f_n$, and let only $\ell(x)$ through $g(x)$ depend on the perturbed parameters.
The $m$-th order sensitivity of $\pi_f$ with respect to the parameter $\theta$ is hence given by
\begin{equation}\label{eq:greeks}
{\partial^m_\theta \pi_f } = \sum_{n\ge 0} f_n {\partial^m_\theta \ell_n }, \quad m  \in \N,
\end{equation}
as $\partial_\theta f_n = 0$. The sensitivity of $\ell_n$ with respect to $\theta$ is given by
\begin{equation} \label{eq:delln}
{\partial_\theta \ell_n }  = \partial_\theta Q(Z_0) \,  \e^{G\, T} \, \vec{H_n} + Q(Z_0) \, {\partial_\theta \e^{G \, T}} \, \vec{H_n}.
\end{equation}
Note that the option Greeks with respect to multiple parameters can be represented as in~\eqref{eq:greeks} with mixed derivatives.

\begin{example}\label{exa:delgam}
The Delta $\Delta_f$ and the Gamma $\Gamma_f$ of an option are obtained by differentiating its price $\pi_f$ with respect to the underlying initial price $\theta=\exp(X_0)$, once and twice, respectively.
In this case, we have that $\partial_\theta \e^{G \, T}=0$ and the Greeks simplify to
\[
\Delta_f =\sum_{n\ge 0} \e^{-X_0} \frac{\partial Q(Z_0)}{\partial X_0}   \e^{G\, T} \, \vec{H_n}
\quad
\text{and}
\quad
\Gamma_f = - \e^{-X_0} \Delta_f + \sum_{n\ge 0} \e^{-2X_0} \frac{\partial^2 Q(Z_0)}{\partial X_0 \partial X_0}  \e^{G\, T} \, \vec{H_n}.
\]
Note that the derivatives of $Q(Z_0)$ with respect to $X_0$ are trivial to compute as $Q(Z_0)$ is a vector of polynomials in $X_0$ and $Y_0$.
\end{example}

To derive the option price sensitivity with respect to some model parameter, we typically need to compute the derivative of the exponential operator $\e^{G\,T}$ with respect to $y$, which is also given by
\begin{equation}\label{eq:deGT}
\frac{\partial \e^{G \, T}}{\partial \theta } = \int_0^1 \e^{x \, G \, T} \; \frac{\partial \, G \, T}{\partial y} \; \e^{(1-x)\, G \, T}\,dx
\end{equation}
as proved in \cite{wilcox1967exponential}.
However, the direct numerical integration in~\eqref{eq:deGT} may be difficult in practice.
Fortunately, the complex-step approach described in~\cite{al2010complex} turns out to be an efficient and precise way to approximate the derivative of the matrix exponential, it is shown that
\begin{equation}\label{eq:deGTapp}
\frac{\partial \e^{G \, T}}{\partial \theta } =  \Im \left(\frac{ \exp(GT + \im h E)}{h}\right) + \mathcal{O}(h^2)
\end{equation}
where $E=\partial GT / \partial \theta$, $|h| \ll 10^{-4}$ is a small constant, and where $\Im$ and $\im$ denote the imaginary part and imaginary number respectively.

The following result shows how to compute~\eqref{eq:deGTapp} without making use of complex arithmetic.
\begin{proposition} \label{prop:deGTv}
We have that
\begin{equation}
\Im \left(\frac{ \exp(GT + \im h E)}{h}\right) = \frac{1}{h} \exp\begin{pmatrix}
GT & - hE \\
hE & GT
\end{pmatrix}_{I,J}
\end{equation}
with the set of indices $I={M+2,\dots,2M+2}$ and $J={1,\dots,M+1}$.
\end{proposition}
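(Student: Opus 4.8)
The plan is to realize the complex matrix algebra as a subalgebra of the real matrix algebra of twice the dimension, and to observe that the matrix exponential respects this realization. Concretely, define the $\R$-linear map $\Phi$ that sends a complex $(M+1)\times(M+1)$ matrix $A+\im B$, with $A,B$ real, to the real $(2M+2)\times(2M+2)$ block matrix $\Phi(A+\im B)=\begin{pmatrix} A & -B \\ B & A\end{pmatrix}$. The first step is to check that $\Phi$ is a homomorphism of $\R$-algebras. Additivity is immediate from the block form; multiplicativity follows by expanding $\Phi\big((A_1+\im B_1)(A_2+\im B_2)\big)$ and $\Phi(A_1+\im B_1)\,\Phi(A_2+\im B_2)$ and matching the four blocks, using that the real and imaginary parts of the product are $A_1A_2-B_1B_2$ and $A_1B_2+B_1A_2$.

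Because $\Phi$ is an $\R$-linear algebra homomorphism on finite-dimensional spaces, it is bounded and commutes with finite products, hence with every power $X^n$; applying it term by term to the convergent series $\exp(X)=\sum_{n\ge 0}X^n/n!$ then gives $\Phi(\exp(X))=\exp(\Phi(X))$ for every complex matrix $X$. Specializing to $X=GT+\im h E$, so that $A=GT$ and $B=hE$, yields $\exp\!\begin{pmatrix} GT & -hE \\ hE & GT\end{pmatrix}=\Phi\big(\exp(GT+\im h E)\big)$. By the definition of $\Phi$, the right-hand side equals the block matrix with $\Re\big(\exp(GT+\im h E)\big)$ on the diagonal blocks and $\pm\Im\big(\exp(GT+\im h E)\big)$ off the diagonal, the lower-left block carrying the $+$ sign.

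It then remains to read off the block in rows $I=\{M+2,\dots,2M+2\}$ and columns $J=\{1,\dots,M+1\}$, which is exactly $\Im\big(\exp(GT+\im h E)\big)$, and to divide by the scalar $h$, which gives the asserted identity for $\Im\big(\exp(GT+\im h E)/h\big)$.

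The only genuinely substantive point is the multiplicativity of $\Phi$ together with its passage through the exponential series; everything else is bookkeeping about which block encodes the imaginary part. The interchange of $\Phi$ with the infinite sum is harmless, since $\Phi$ is linear and bounded on finite-dimensional spaces, so the series $\sum_{n\ge0}\Phi(X^n)/n!$ converges to $\Phi(\exp(X))$ and no convergence subtlety arises.
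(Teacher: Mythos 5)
Your proof is correct, and it reaches the same block-exponential identity as the paper's Lemma~\ref{lem:matexp}, but by a cleaner, purely algebraic route. The paper argues in two stages: first it checks that matrices of the form $\begin{pmatrix} A & -B \\ B & A \end{pmatrix}$ are closed under multiplication, so the power series gives $\exp\begin{pmatrix} A & -B \\ B & A \end{pmatrix} = \begin{pmatrix} R & -S \\ S & R \end{pmatrix}$ for \emph{some} real $R,S$; then, to identify $R+\im S$ with $\exp(A+\im B)$, it runs a separate linear-ODE argument, comparing the flow $dY_t = \begin{pmatrix} A & -B \\ B & A \end{pmatrix} Y_t\,dt$ started at $\begin{pmatrix} I_n \\ 0_n \end{pmatrix}$ with the complex flow $dX_t = (A+\im B)X_t\,dt$. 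Your realification map $\Phi(A+\im B)=\begin{pmatrix} A & -B \\ B & A \end{pmatrix}$ accomplishes both steps at once: the multiplicativity computation is literally the same block computation as the paper's closure step, but by phrasing it as an $\R$-algebra homomorphism and pushing $\Phi$ through the exponential series (legitimate, since $\Phi$ is linear on a finite-dimensional space, hence continuous), you get $\exp(\Phi(X))=\Phi(\exp(X))$ directly, with the identification of the blocks as $\Re$ and $\Im$ of $\exp(GT+\im hE)$ built into the definition of $\Phi$. What your route buys is economy: no auxiliary flows and no appeal to uniqueness of solutions of linear ODEs; what the paper's route buys is that it never needs to discuss interchanging $\Phi$ with an infinite sum, though as you note that interchange is harmless here.
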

With this result, one can also apply efficient algorithms to directly compute the action of the matrix exponential in~\eqref{eq:delln} as follows
\begin{equation}\label{eq:deAtv}
{\partial_\theta \e^{G \, T}} \, \vec{H_n} \approx \frac{1}{h} \left( \exp\begin{pmatrix}
GT & - hE \\
hE & GT
\end{pmatrix} \begin{pmatrix}
\vec{H_n} \\ 0_{M+1}
\end{pmatrix}\right)_{1, I}.
\end{equation}
In practice, we can efficiently approximate the option price sensitivities by truncating the series~\eqref{eq:greeks} at some finite order $N$.
In Section~\ref{sec:calib} we apply the above results to compute the Greeks and the gradient of the squared price error in the Heston model.

\subsection{Gaussian Mixture Specification} \label{sec:GMcons}

We present an efficient approach to specify Gaussian mixtures for the auxiliary density.
It is based on the discretization of the single source of randomness affecting $M_T$ and $C_T$ in \eqref{eq:MT} and \eqref{eq:CT}: the trajectory of $W_1$ on $[0,T]$.
Fix a time grid $0=t_0<t_1<t_2<\dots<t_n=T$ with constant step size $t_{i+1}-t_i=\Delta t$.
Then, apply a Monte-Carlo approach to obtain $K$ weighted $\R^n$-valued vectors $Z^{(k)}$ of normally distributed Brownian increments
\begin{equation}\label{eq:Zk}
Z^{(k)} \sim \left(\Delta W_{1,t_1}, \, \dots, \, \Delta W_{1,t_n} \right) \sim \Ncal\left(\mathbf{0}_n, \, \diag(\Delta t \,  \mathbf{1}_n)\right)
\end{equation}
where $\Delta W_{1,t_i}=W_{1,t_{i}} - W_{1,t_{i-1}}$.
The stochastic differential equations satisfied by $(Y_t, \, M_t, \, C_t)$ can then be numerically integrated to obtain $K$ triplets $(c_k, \,M_T^{(k)}, \, C_T^{(k)})$ where $c_k$ is the weight associated to $Z^{(k)}$.
We can then let these triplets be the parameters $(c_k, \, \mu_k, \, \sigma_k^2)$ of the Gaussian mixture in~\eqref{eq:GM}.

Standard i.i.d.\ simulations such that $c_k=1/K$ would be costly because the number of triplets $K$ required to obtain a good approximation of $g(x)$ may be large.
In addition, a large $K$ makes the computation of the coefficients $f_n$ and $\ell_n$ computationally more demanding.
Therefore, a weighted Monte-Carlo method appears to be a more sensible approach to parametrize the Gaussian mixture.
For example, one can use an optimal $K$-quantization of the multivariate Gaussian distribution.
This is a discrete probability distribution in $\R^n$ with $K$ mass points that best approximates the multivariate normal~\eqref{eq:Zk} in the $L^2$ sense, we refer to~\cite{pages2003optimal} for more details.
The $K$ pairs $(c_k,\,Z^{(k)})$ can always be precomputed and loaded on demand because they do not depend on the stochastic volatility model at hand.

The discretization scheme used to numerically integrate the stochastic differential equations may also play an important role.
We want a scheme that performs well with a large time step $\Delta t$ as we are only interested in the log price density at time $T$.
Numerical experiments on multiple models showed that good results can notably be obtained with the Interpolated-Kahl-J\"ackel (IJK) scheme introduced in~\cite{kahl2006fast}.
The IJK scheme boils down to a scheme with linear interpolation of the drift and of the diffusion, consideration for the correlated diffusive terms, and with a higher order Milstein term.
It bears little additional computational cost.

\begin{remark}[Moment matched auxiliary density]
An alternative way to construct an auxiliary density is to search for the component weights and parameters such that the first $N^*$ moments of the log price and of the auxiliary distribution match, for some fixed integer $N^*$.
This problem typically has many solutions when $K$ is large and no solution when $K$ is small.
Take the density of an arbitrary solution of this moment problem, there is a priori no guarantee that it has the same shape as the log price density.
This may lead to erroneous option price approximations as the misspecified auxiliary density also impacts the payoff coefficients.
In our experience, it is difficult to regularize or constraint the moment matching problem so as to obtain a practical auxiliary density.
\end{remark}

\subsection{Nonconvergent Option Price Approximations} \label{sec:divapprox}

We provide some explanations why, even when $\ell \notin L_{w}^2$, the price approximation $\pi_f^{(N)}$ in~\eqref{eq:piN} can be an accurate approximation of the true option price $\pi_f$.
More precisely, we estimate the option price approximation error.

Assume that instead of considering the true process $X_t$ we consider the modified process $X_t^\tau$ whose dynamics is
\[
dX^\tau_t = (r-\delta)dt - \frac{1}{2}d\langle X^\tau\rangle_t + 1_{\{\tau>t\}}\left(\Sigma_1(Y_t)\, dW_{1t} + \Sigma_2(Y_t)\, dW_{2t}\right)
\]
where the stopping time $\tau$ is defined by
\[
\tau = \inf\left\{ t\ge0 : |M_t| \geq K_1 \text{ or } C_t \geq K_2  \right\}
\]
for some positive constants $K_1$ and $K_2$.
The event $\{\tau<T\}$ is thought to be very unlikely, so that ${\Pa[X_T=X_T^\tau]} > 1-\epsilon_0$ for some small $\epsilon_0>0$.
Note that the modified discounted cum-dividend asset price process $\e^{-(r-\delta)t + X^\tau_t}$ remains a martingale.
As a consequence of this construction, Proposition~\ref{prop:GMparam} applies to the process $X_t^\tau$ which implies that the option price has a series representation for some Gaussian mixture density $w(x)$,
\begin{equation}\label{eq:pricestopped}
\pi^\tau_f=\E[f(X^\tau_{T})]=\sum_{n=0}^\infty f_n\ell_n^{\tau},
\end{equation}
where the payoff coefficients $f_n$ are defined as in~\eqref{eq:fn} and the likelihood coefficients $\ell^{\tau}_n$ are defined by
\begin{equation}\label{eq:defHermitestopped}
\ell_n^{\tau}=\E[H_n(X^\tau_{T})].
\end{equation}
However, we do not know the moments of $X_T^{\tau}$ and we will use instead the moments of $X_T$.
We hence obtain the option price approximation
\begin{equation*} \label{eq:piN2}
\pi_f^{(N)}=\sum_{n=0}^Nf_n\ell_n,
\end{equation*}
which is similar to~\eqref{eq:piN} but with the important difference that it will not converge to the true price $\pi_f$.
The pricing error with this approximation can be decomposed into three terms,
\begin{equation}\label{eqerrort}
  \left| \pi_f - \pi_f^{(N)} \right|  \leq  \left|\E[(f(X_T)-f(X_t^\tau))1_{\{\tau\le T\}}]\right|
+ \left|  \pi^\tau_f - \sum_{n=0}^N f_n \ell_n^\tau \right|
+ \left| \sum_{n=0}^N f_n (\ell_n-\ell_n^\tau) \right| =: \epsilon_1+\epsilon_2+\epsilon_3.
\end{equation}
The first error term $\epsilon_1$ in \eqref{eqerrort} is the difference between the option price with $X_T$ and with $X^\tau_{T}$.
It can easily be controlled. For example if $|f(x)|$ is bounded by $K$ on $\R$ then we have
\[\epsilon_1 \le 2 \, K \,\epsilon_0.\]
The difference between the call option prices can also be bounded by using the put-call parity given that $\E[\e^{X_T}]=\E[\e^{X^\tau_T}]$.
The second error term $\epsilon_2$ in \eqref{eqerrort} is the option price approximation error for the log price $X^\tau_T$ that converges to zero as $N\to\infty$.
The third error term $\epsilon_3$ in \eqref{eqerrort} is the difference between the option price approximations of $X_T$ and $X_T^\tau$, that we expect to be small for a low order $N$ but will typically diverge as $N\to\infty$.
This last error term can be further decomposed as
\begin{equation}\label{eq:thirderror}
\begin{aligned}
\epsilon_3 &=\left|\E\left[\sum_{n=0}^Nf_n(H_n(X_T)-H_n(X^{\tau}_{T}))1_{\{\tau\leq T\}}\right]\right|  \\
&\le \left|\E\left[\sum_{n=0}^Nf_nH_n(X_T)1_{\{\tau\leq T\}}\right]\right|+\left|\E\left[\sum_{n=0}^Nf_nH_n(X^{\tau}_T)1_{\{\tau\leq T\}}\right]\right|.
\end{aligned}
\end{equation}
The first term on the right side of~\eqref{eq:thirderror} is precisely the one that is not expected to converge as $N\to\infty$.
Applying the Cauchy-Schwarz inequality we derive the upper bound
\[
\left|\E\left[\sum_{n=0}^Nf_nH_n(X_T)1_{\{\tau\leq T\}}\right]\right|\le \sqrt{\E[p_N(X_T)] \, \epsilon_0},
\quad \text{where} \quad p_N (x)=\left(\sum_{n=0}^Nf_nH_n(x) \right)^2
\]
which can be computed explicitly.
In practice, this bound may give an indication to whether the option price approximation $\pi^{(N)}_f$ is reasonable.

\begin{remark} \label{rem:momentmatching}
A simple trick to stabilize the option price approximation when Proposition~\ref{prop:GMparam} does not apply is to match the $N^*$-th moment of the log price and the auxiliary density, $\E[X_T^{N^*}] = \int_\R x^{N^*} w(x)dx$.
By doing so, the first $N^*$ moments of $g(x)$ and $w(x)$ will be of the same magnitudes which should result in well-behaved values for the $\ell_n$ coefficients.
This can be achieved, for example, by including a component with small mixture weight and which is used to tune the $N^*$-th moment.
\end{remark}

\section{Extensions} \label{sec:ext}

In this section we discuss how multivariate expansions can be constructed by letting the auxiliary density be a tensor product of univariate densities.
We also present weighted least squares methods to approximate multivariate payoff coefficients as an alternative approach as the derivation of explicit recursive formulas in higher dimensions may be challenging.

\subsection{Multivariate expansions}

In this section we adapt our approach to tackle multivariate pricing problems by working with a tensor product of univariate auxiliary densities, in which case the multivariate basis is also a tensor product of univariate bases.

Let $\mathbf X$ be a $d$-dimensional random variable with density $g(\mathbf x)$.
The coordinates of $\mathbf X$ can be log prices of multiple assets or of the same assets at multiple dates.
We aim to compute the option price
\[
\pi_f = \int_{\R^d} f(\mathbf x) g(\mathbf x) d \mathbf x
\]
for some discounted payoff function $f(\mathbf x)$.
We transform $\mathbf X$ in order to work with
\[
\mathbf Y = A \mathbf X + b
\]
for some $(d\times d)$-dimensional non-singular matrix $A$ and some $d$-dimensional vector $b$.
The pricing problem then rewrites
\[
\pi_f = \int_{\R^d} \tilde f(\mathbf y)  \tilde g(\mathbf y) d \mathbf y
\]
where $\tilde g (\mathbf y)= g(A^{-1}(\mathbf y - b))$ is the density of $\mathbf Y$, and $\tilde f(\mathbf y)= f(A^{-1}(\mathbf y - b))$ is the new payoff function.
The coordinates of $\mathbf Y$ can be made (conditionally) uncorrelated in which case it is sensible to let the auxiliary density be a tensor product of univariate densities, that is
\begin{equation}\label{eq:tensorw}
w(\mathbf y) = \prod_{i=1}^d w_i(y_i)
\end{equation}
for some univariate densities $w_i(y_i)$ with $i=1,\dots,d$.
The multivariate orthogonal basis is then also a tensor product of the marginal orthogonal basis and the marginal univariate densities $w_i(y_i)$ can be  specified arbitrarily.
For example~\cite{ackerer2018jacobi} approximated a forward-start option price by working with the log returns in place of the log prices and using a tensor product of Gaussian densities as auxiliary density.

\begin{remark}
Multivariate orthogonal polynomial expansions with respect to an auxiliary density which cannot be written as in~\eqref{eq:tensorw} is fundamentally more difficult for several reasons.
First, the three-term relation~\eqref{eq:ONBkrec} takes a vector-matrix form whose coefficients are in general not known explicitly, see~\cite[Chapter 3.3]{dunkl2014orthogonal}.
Second, Proposition~\ref{prop:ONBmix} does not hold in general for multivariate orthogonal basis.
Indeed, its proof builds on the Gauss quadrature rule associated to each univariate density whose multivariate equivalent does not necessarily exists, see~\cite[Chapter 3.8]{dunkl2014orthogonal}.
Third, the basis dimensionality may create computational challenges in practice.
\end{remark}

\subsection{Payoff coefficients approximation} \label{sec:wls}

The payoff coefficients may not be tractable for some payoff functions or for some auxiliary densities.
In such situation, one could still integrate the payoff function with respect to the approximated density as presented in~\cite{filipovic2013density}. For example~\cite{ackerer2018jacobi} approximated a discretely monitored Asian option price using multivariate Gaussian quadrature.
Alternatively, we discuss hereinafter a generic way to approximate the payoff coefficients which may lead to further reduction of computing time.
More precisely, this approach belongs to the class of discrete weighted least squares methods studied in Computational Mathematics, see~\cite{cohen2013stability,hampton2015compressive,narayan2017christoffel}.

Let $H_n(\mathbf x)$ be an orthogonal polynomial basis with respect to the possibly multivariate auxiliary density $w(\mathbf x)$.
Fix an integer $N$ and let $M$ be the number of basis elements with total order less or equal than $N$.
We approximate the first $M$ payoff coefficients $f_n$ of the payoff function $f(\mathbf x)$ by solving the following weighted least-square problem
\begin{equation}\label{eq:wlspb}
\argmin_{\mathbf c \in \R^M} \;  \frac{1}{N_{\rm sim}} \, \sum_{i=1}^{N_{\rm sim}} \, q_i \, \left(f(\mathbf{x}_i) - \sum_{|n|\le N} c_n\,H_n(\mathbf{x}_i) \right)^2
\end{equation}
for some $N_{\rm sim}$ astutely simulated points $\mathbf{x}_i$ with corresponding weight $q_i$.
By expanding this expression we have that a solution $\mathbf c$ of~\eqref{eq:wlspb} should satisfy
\[
\mathbf G \mathbf c = \mathbf d
\]
where the $M\times M$ Gramian matrix $\mathbf G$ and the $M$ vector $\mathbf d$ are defined by
\[
\mathbf G_{j,k} = \frac{1}{N_{\rm sim}} \, \sum_{i=1}^{N_{\rm sim}} \, p_i \,  H_j(\mathbf{x}_i)  \, H_k(\mathbf{x}_i)
\quad
\text{and}
\quad
\mathbf d_j = \frac{1}{N_{\rm sim}} \, \sum_{i=1}^{N_{\rm sim}} \, p_i \, H_j(\mathbf{x}_i) \, f(\mathbf{x}_i).
\]
If $\mathbf G$ is non-singular then~\eqref{eq:wlspb} has a unique solution.
The exact payoff coefficients $f_n$ are solution of the following problem
\[
\argmin_{\mathbf c \in \R^M} \;  \Big\Vert f(\mathbf{x}) - \sum_{|n|\le N} c_n\,H_n(\mathbf{x}) \Big\Vert^2_w
\]
which makes it tempting to simulate the points $\mathbf{x}_i$ from the density $w(\mathbf{x})$ which implies that $q_i=1$ and that~\eqref{eq:wlspb} simplifies to a least-square problem.
However, this does not guarantee that the matrix $\mathbf G$ is invertible and would typically require a large number of simulations $N_{\rm sim}$.
Efficient weighted simulation methods are therefore being developed to tackle these issues, see for example~\cite{cohen2017optimal}.
One limitation of this approach is that all the coefficients with total order less or equal than $N$ are computed at once rather than incrementally.
The simulated random variables $\mathbf{x}_i$ do not depend on the particular payoff and can therefore be used to approximate the payoff of many contracts with the same maturity.
Note that these random variables are significantly faster to simulate than the log price paths.

\section{Numerical Applications} \label{sec:MIXapp}

In this section we approximate the price of call options in the Jacobi, Stein--Stein, and Hull-White models, and of an option on the realized variance in the GARCH model.
We then approximate call option Greeks for the Heston model, and benchmark the computational efficiency of our approach against the Fourier transform formulas.
We also approximate the payoff coefficients using a particular weighted least squares algorithm.

\subsection{Option Pricing} \label{sec:num_pricing}

We show that a simple Gaussian mixture with two components can be used to improve the convergence rate of the option price approximations in the Jacobi model.
We then derive accurate option price approximations in the Stein--Stein and Hull--White models using Gaussian mixtures with many components for the auxiliary density.
We also approximate the price of an option on the realized variance in the GARCH diffusion model.

\paragraph{The Jacobi model.}
The dynamics of $(X_t,Y_t)$ in the Jacobi model has the form
\begin{align*}
dY_t &= \kappa(\theta-Y_t)\,dt + \sigma\sqrt{Q(Y_t)}\,dW_{1t}\\
dX_t &= \left(r-\delta -  Y_t/2\right)dt +  \rho\sqrt{Q(Y_t)} \, dW_{1t}+\sqrt{Y_t-\rho^2 Q(Y_t)} \,dW_{2t}
\end{align*}
where $Q(y)={(y-y_{min})(y_{max}-y)}/{(\sqrt{y_{max}}-\sqrt{y_{min}})^2}$, $\kappa>0$, and $\theta\in(v_{min},v_{max}]$.
The log return volatility $\sqrt{Y_t}$ takes values in $[\sqrt{v_{min}},\,\sqrt{v_{max}}]$, the leverage is ${\rm lev}(X_t)=\rho \sqrt{{Q(Y_t)}/{Y_t}}$, and the volatility of volatility is ${\rm volvol}(X_t)=(\sigma/2)\sqrt{{Q(Y_t)}/{Y_t}}$.
The Jacobi model was introduced in~\cite{ackerer2018jacobi}.

We illustrate the advantages of using a Gaussian mixture with the Jacobi model when ${\mathbb{V}{\rm ar}}[X_T]<y_{max}T/2$, so that the first two moments cannot be matched with a single auxiliary Gaussian density $w_1(x)=v_1(x)$.
We consider here a Gaussian mixture with two components $w_2(x)$ as auxiliary density given by
\[
w_2(x) = c_1 \, \frac{1}{\sqrt{2\pi\sigma_{1}^2}}\exp\left(-\frac{(x - \mu_{1})^2}{2\sigma_{1}^2}\right) + (1-c_1)  \, \frac{1}{\sqrt{2\pi\sigma_{2}^2}}\exp\left(-\frac{(x - \mu_{2})^2}{2\sigma_{2}^2}\right)
\]
for some weight $0<c_1<1$, some mean parameters $\mu_{1},\mu_{2}\in\R$ and volatility parameters $\sigma_{1},\sigma_{2}>0$.
We match the the first two moments of $X_T$ and $w_2(x)$, which gives the following underdetermined system of equations
\begin{align}
\E[X_T] &= c_1\,\mu_{1} + (1-c_1)\,\mu_{2} \label{eq:mts1}\\
\E[X_T^2] &= c_1\,(\sigma_{1}^2 + \mu_{1}^2) + (1-c_1)\,(\sigma_{2}^2 + \mu_{2}^2)\label{eq:mts2}.
\end{align}
We set $\E[X_T] = \mu_{1} = \mu_{2}$, so that \eqref{eq:mts1} is automatically satisfied and \eqref{eq:mts2} rewrites
\begin{equation}\label{eq:mts2bis}
c_1 = \frac{\sigma_{2}^2 - {\mathbb{V}{\rm ar}}[X_T]}{\sigma_{2}^2 - \sigma_{1}^2}.
\end{equation}
Hence it must be that
$\left\vert \sigma_{2}^2 - {\mathbb{V}{\rm ar}}[X_T] \right\vert < \left\vert \sigma_{2}^2 - \sigma_{1}^2 \right\vert$
and that
$\left( \sigma_{2}^2 - {\mathbb{V}{\rm ar}}[X_T] \right)\left( \sigma_{2}^2 - \sigma_{1}^2 \right)\ge0$.
We set $\sigma_2=\sqrt{y_{max}T/2}+10^{-4}$, so that Proposition~\ref{prop:GMparam} applies which ensures that the option price expansion will converge to the true price.
Then we arbitrarily fix $c_1=95\%$ and solve~\eqref{eq:mts2bis} to get $\sigma_{1}$.
The reason behind these choices is that, by doing so, the mixture component with large weight $c_1$ is almost a Gaussian approximation of the log price density as $\E[X_T]=\mu_{1}$ and ${\mathbb{V}{\rm ar}}[X_T]\approx\sigma_{1}^2$.
In the following numerical example we use the parameters: $r=\delta=x_0=0$, $\kappa=0.5$, $\theta=Y_0=0.04$, $\sigma=1$, $y_{min}=10^{-4}$, $y_{max}=0.36$, and $T=1/12$.
The upper bound on the volatility support is therefore $60\%$.

The left panel of Figure~\ref{fig:IV-GM} displays the Gaussian mixture with one and two components used as auxiliary density along with the log price density approximation~\eqref{eq:densapp} at the truncation order $N=100$.
We expect that $g(x) \approx \ell^{(100)}(x)w_2(x)$.
It is clear from the figure that $w_1(x)$ is a poor approximation of the density $g(x)$ whereas $w_2(x)$ appears more sensible.
As a consequence, the implied volatility of a call option with log strike ${\bar k}=0$ converges significantly faster using $w_2(x)$ as auxiliary density as can be seen on the right panel of Figure~\ref{fig:IV-GM}.
The call (put) implied volatility is initially overestimated with $w_1(x)$ because significantly more weight is put in the right (left) tail than with $g(x)$.
This behavior is confirmed in Table~\ref{tab:IV-GM} which reports the implied volatility error with respect to the approximation at the order $N=100$ for call options with different moneyness and for different truncation order.
More problematic, the option price approximation of the far OTM option is negative between $N=2$ and $N=18$ with $w_1(x)$.

\paragraph{The Stein--Stein model.}
The dynamics of $(X_t,Y_t)$ in the Stein--Stein model has the form
\begin{equation}\label{eq:linearSV}
\begin{split}
dY_t &= \kappa(\theta - Y_t) dt + \sigma dW_{1t} \\
dX_t &= \left(r-\delta -  Y_t^2/2\right)dt + \rho Y_t dW_{1t} + \sqrt{1-\rho^2} Y_t dW_{2t}
\end{split}
\end{equation}
for some positive parameters $\kappa$, $\theta$, and $\sigma$, and with $\rho\in(-1,1)$.
The process $Y_t$ follows an Ornstein-Uhlenbeck process.
The asset return volatility is $|Y_t|$, the leverage is ${\rm lev}(X_t)=\rho \sign(Y_t)$, and the volatility of volatility is ${\rm volvol}(X_t)=\sigma$.
This model has been introduced by~\cite{stein1991stock} and generalized by~\cite{schobel1999stochastic} to allow for non-zero leverage.

The Stein--Stein model has the particularity that the diffusion $(X_t,Y_t,Y_t^2)$ is not only polynomial but also affine.
This enables the use of standard Fourier transform techniques to compute option prices that we use as reference option prices, see for details~\cite{carr1999option}, \cite{duffie2003affine}, and \cite{fang2009novel}.
We approximate option prices using a Gaussian mixture for the auxiliary density as described in Section~\ref{sec:GMcons}.

Applying the IJK scheme we obtain the following discretization for the dynamics of the process $Y_t$
\[
Y_{t_{i+1}} = Y_{t_{i}} + \kappa(\theta - Y_{t_{i}})\Delta t + \sigma \Delta W_{1,t_{i+1}}  \quad i=0,\dots,n-1
\]
for the conditional mean
\[
M_T = X_0 -\frac{1}{2} \sum_{i=1}^n \frac{Y_{t_{i}}^2+Y_{t_{i-1}}^2}{2}\Delta t + \rho \sum_{i=1}^n Y_{t_{i}} \Delta W_{1,t_i} + \frac{1}{2}\rho\sigma \sum_{i=1}^n (\Delta W_{1,t_i}^2 - \Delta t)
\]
and for the conditional variance
\[
C_T = (1-\rho^2) \sum_{i=1}^n \frac{Y_{t_{i}}^2+Y_{t_{i-1}}^2}{2}\Delta t.
\]

The time to maturity is $T=1/12$ and we use a single step, $\Delta t =T$. We therefore use the optimal quantizers of the univariate normal distribution to approximate the Brownian increment $\Delta W_{1\Delta t}$.
We also approximate option prices using an extended mixture obtained by including one additional component whose purpose is to adjust the moments of the auxiliary density as suggested in Remark~\ref{rem:momentmatching}.
This additional component has a fixed weight equal to $5\%$ while the other component weights are scaled down by $95\%$, its mean parameter is set equal to zero, and its variance parameter is computed such that the $N^*$-th moment of the auxiliary density and of the log price density are equal.
A different number of components $K$ for each auxiliary density and the first moment of the log price is always matched such that $\ell_1=0$.
The parameters are $N^*=20$, $r=\delta=X_0={\bar k}=0$, $\kappa=0.5$, $\theta=Y_0=0.2$, $\sigma=0.5$, and $\rho=-0.5$.

Table~\ref{tab:SteinIV} reports the implied volatility errors for a call option with log strike $k=0$ for different $K$ and using the Gaussian mixture and the extended Gaussian mixture as auxiliary density, respectively denoted GM and GM$^+$.
We can see that the implied volatility errors rapidly become small as the truncation order $N$ increases for all $K$ with the extended mixture and continue to converge toward $\pi_f$ well after the matched moment $N^*$.
On the other hand, the implied volatility errors do not appear to converge when $K$ is small with the standard Gaussian mixture.
Note that good option price approximations can be achieved by choosing a large number of components $K$ even without moment matching.

\paragraph{The Hull--White model.}
The dynamics of $(X_t, Y_t)$ in the Hull--White model has the form
\begin{equation}\label{eq:HW}
\begin{split}
dY_t &= \kappa(\theta - Y_t) dt + (\nu + \gamma Y_t)dW_{1t} \\
dX_t &= (r-\delta -Y_t^2/2)dt + \rho Y_t dW_{1t} + \sqrt{1-\rho^2} Y_t dW_{2t}
\end{split}
\end{equation}
for some positive parameters $\kappa, \theta, \gamma>0$, a real-valued parameter $\nu$, and with $\rho\in(-1,1)$.
The process $Y_t$ takes values in $({-\nu}/{\gamma},\infty)$ if $\theta\ge{-\nu}/{\gamma}\ge 0$, so that $\nu + \gamma Y_t>0$.
The log return volatility is $|Y_t|$, the leverage is ${\rm lev}(X_t)=\rho $, the volatility of volatility is ${\rm volvol}(X_t)=\nu + \gamma Y_t$.

The Hull--White model has been introduced by~\cite{hull1987pricing} with $\nu=0$ and $\rho=0$ and has been generalized by \cite{sepp2016log} for non-zero leverage and jumps but still with $\nu=0$.
The Stein--Stein model is obtained as a limit case for $\gamma\rightarrow 0$ and $\nu\neq0$.
The dynamics of the Hull--White model is extremely volatile as, for example, the stock price does not even admit a second moment for any finite horizon when the correlation parameter $\rho$ is not sufficiently negative, see \cite{lions2007correlations}.

The diffusion $(Y_t, Y_t^2, X_t)$ has the polynomial property but is not affine.
Indeed, applying the infinitesimal generator of \eqref{eq:HW} to a monomial $Y_t^mX_t^n$ gives
\begin{equation}\label{eq:linSVgen}
\begin{split}
\Gcal\, y^m x^n &= m (\kappa \theta + (m-1)\nu\gamma) y^{m-1}x^{n} + m(0.5(m-1)\gamma^2 - \kappa)y^{m}x^{n} \\
&\quad + 0.5 m(m-1) \nu^2 y^{m-2} x^{n} + m n \rho \gamma y^{m+1}x^{n-1} +n(m\rho\nu + r-\delta)y^{m}x^{n-1} \\
&\quad + 0.5 n(n-1) y^{m+2}x^{n-2} - 0.5 n y^{m+2}x^{n-1}
\end{split}
\end{equation}
for any integers $m$ and $n$. Note that the last term in~\eqref{eq:linSVgen} is a monomial of order $m+n+1$, so that the diffusion $(Y_t,X_t)$ does not have the polynomial property on its own.

Applying the IJK scheme we obtain the following discretization for the dynamics of the process $Y_t$
\[
Y_{t_{i+1}} = Y_{t_{i}} + \kappa(\theta - Y_{t_{i}})\Delta t + (\nu+\gamma Y_{t_{i}}) \Delta W_{1,t_{i+1}} + \frac{1}{2} \gamma (\nu+\gamma Y_{t_{i}})(\Delta W_{1,t_{i+1}}^2 - \Delta t), \quad i=0,\dots,n-1
\]
for the conditional mean
\[
M_T = X_0 -\frac{1}{2} \sum_{i=1}^n \frac{Y_{t_{i}}^2+Y_{t_{i-1}}^2}{2}\Delta t + \rho \sum_{i=1}^n Y_{t_{i}} \Delta W_{1,t_i} + \frac{1}{2}\rho \sum_{i=1}^n (\nu+\gamma Y_{t_{i}})(\Delta W_{1,t_i}^2 - \Delta t)
\]
and for the conditional variance
\[
C_T = (1-\rho^2) \sum_{i=1}^n \frac{Y_{t_{i}}^2+Y_{t_{i-1}}^2}{2}\Delta t.
\]

We follow the same procedure as for the Stein--Stein model and set a Gaussian mixture as auxiliary density, with and without $N^*$ moment matching.
The parameters are $r=\delta=X_0=0$, $\kappa=0.5=-\rho$, $\theta=Y_0=0.2$, $\nu=0.25$, $\gamma=0.5$, and $T=1/12$.
Table~\ref{tab:HullWhiteIV} reports the implied volatility values for a call option with log strike ${\bar k}=0$ for different $K$ and using the Gaussian mixture and the extended Gaussian mixture as auxiliary density, respectively denoted GM and GM$^+$.
As for the Stein--Stein model, the price series seems to converge faster when the number of Gaussian component $K$ is large, and to stabilize for finite $N\le N^*$ when the higher moment $N^*$ is matched.
The price series behavior beyond the matched moment $N^*$ is however not as well behaved as for the Stein--Stein model.
This seems to be a consequence of the exponentially fast growing higher moments in the Hull--White model which causes the third error term $\epsilon_3$ in \eqref{eqerrort} to rapidly grow with~$N$.

Figure~\ref{fig:HWkurtosis} displays the kurtosis of the log price, $\E[(X_T-\E[X_T])^4]/\mathbb{V}{\rm ar}[X_T]^2$, for the Hull--White model for $\nu\in\{-0.05,0,0.25,0.5\}$ and $\gamma\in\{ 0.5,1\}$, and for maturities $T$ ranging from one week and one year.
The kurtosis in the limiting Stein--Stein model is obtained for $\gamma= 0$.
We observe that the kurtosis grows with time $T$ and in particular with $\gamma$ and, for larger $\gamma$ values, the kurtosis appears to explode at some point in time.
This is in sharp contrast to the Black-Scholes model for which the kurtosis is constant, and to the Stein--Stein model for which the kurtosis range appears to take reasonable values.
The erratic moments behavior renders polynomial option pricing techniques more delicate to apply for the Hull--White model than for the Jacobi or Stein--Stein models, partly because of numerical precision difficulties.

\begin{remark}
This higher moments explosion in the Hull-White model can be related to the non-stationarity of the log price instantaneous variance process $Y_t^2$ whose dynamics is given by
\[
dY^2_t = \left((\gamma^2 - 2\kappa)Y_t^2 + 2\kappa\theta Y_t + \nu^2 \right)dt + 2\left(\nu Y_t  + \gamma Y^2_t\right) dW_{1t}
\]
where the drift is positive for all $Y_t>-\nu/\gamma$ when
\begin {enumerate*} [label=\itshape\alph*\upshape)]
\item $\gamma^2>2\kappa$ and $\Delta=(2\kappa\theta)^2 - 4\nu^2(\gamma^2 - 2\kappa)<0$,
or when \item $\gamma^2=2\kappa$, $\nu \le 0$, and $-\nu\theta\sqrt{2\kappa}>\nu^2$,
or when \item $\gamma^2>2\kappa$, $\Delta\ge0$ and $(-2\kappa\theta + \Delta)/(2\gamma^2-4\kappa)) \ge -\nu/\gamma$.
\end {enumerate*}
\end{remark}

\paragraph{The GARCH model.}
The dynamics of $(X_t, Y_t)$ in the GARCH diffusion model has the form
\begin{equation}\label{eq:garch}
\begin{split}
dY_t &= \kappa(\theta - Y_t) dt + (\nu + \gamma Y_t)dW_{1t} \\
dX_t &= (r-\delta -Y_t/2)dt + \rho \sqrt{Y_t} dW_{1t} + \sqrt{1-\rho^2} \sqrt{Y_t} dW_{2t}
\end{split}
\end{equation}
for some positive parameters $\kappa,\theta,\gamma>0$, a non-positive parameter $\nu\le0$, and with $\rho\in(-1,1)$.
The log return volatility is $\sqrt{Y_t}$, the leverage is ${\rm lev}(X_t)=\rho$, the volatility of volatility is ${\rm volvol}(X_t)=\sigma\sqrt{Y_t}/2$.

When $\nu=0$ the process $Y_t$ follows the GARCH diffusion process introduced in \cite{nelson1990arch}.
The link between discrete time and continuous GARCH processes has been studied in \cite{drost1996closing}.
The diffusion $(X_t, Y_t)$ is not polynomial (and hence not affine), unless $\rho=0$, and there is no analytic expression to price options in general. A formula to approximate option prices in the GARCH diffusion model was derived in \cite{barone2005option} for the polynomial case $\rho=0$.
In this case, our method, as applied on the Stein--Stein and Hull-White models, can also be used to accurately approximate stock option prices.
When $\rho\neq 0$ the GARCH diffusion can be approximated by a polynomial model in order to approximate option prices, see~\cite{ackerer2018poly}.

We now show how to approximate the price of an option on the realized variance with discounted payoff given by
\[ \e^{-rT}(I_T - {\bar k})^+, \quad \text{where $I_t = \frac{1}{T} \int_0^t Y_s ds$.}\]

The diffusion $(Y_t, I_t)$ has the polynomial property but is not affine. Indeed, applying its infinitesimal generator to the monomial $Y_t^mX_t^n$ gives
\begin{align*}
\Gcal\, y^m i^n &= m (\kappa \theta + (m-1)\nu\gamma) y^{m-1}i^{n} + m(0.5(m-1)\gamma^2 - \kappa)y^{m}i^{n} \\
&\quad + 0.5 m(m-1) \nu^2 y^{m-2} i^{n} + (1/T) n y^{m+1} i^{n-1}
\end{align*}
for any integers $m$ and $n$.

The process $I_t$ being lower bounded by $-\nu/\gamma$, we let the auxiliary density $w(x)$ be a Gamma density with parameter $\xi=-\nu/\gamma$.
We fix the parameters $\alpha$ and $\beta$ so that the first moment is matching and that the variance of the auxiliary density is twice large than the one from $I_t$.
This gives,
\[
\E[I_T]=\xi + \frac{\alpha}{\beta} \quad \text{and} \quad  \mathbb{V}{\rm ar}[I_T]=\frac{\alpha}{2\beta^2}
\quad \Leftrightarrow \quad
\beta = \frac{\E[I_T]-\xi}{ 2\mathbb{V}{\rm ar}[I_T]} \quad \text{and} \quad \alpha = \frac{(\E[I_T]-\xi)^2}{ 2\mathbb{V}{\rm ar}[I_T]}.
\]
We do not know whether $g/w$ belongs to $L_w^2$. This motivates the arbitrary choice of setting the second moment of $w$ be strictly larger than $\E[I_t^2]$, so that the right tail of the auxiliary density is likely to be sufficiently thick.
The parameters are $r=\delta=X_0=0$, $\kappa=0.5=-\rho$, $\theta=Y_0=0.2$, $\nu=-0.025$, $\gamma=0.5$, and $T=1/2$.

The left panel of Figure~\ref{fig:RVoption} displays the auxiliary density $w$, the density approximation $\ell^{(20)}w$ at the order $20$, and a simulated density $g_{\rm mc}$ with 100'000 paths.
The simulations were performed with the time step 1/252 and the Euler discretization scheme.
As expected, because of the second moment specification, we observe that $w$ is more dispersed than $g_{\rm mc}\approx g$.
Interestingly, the density approximation at the order 20 is visually almost perfectly aligned with the simulated density.
The right panel of Figure~\ref{fig:RVoption} displays the price series approximation of an option on the realized variance with ${\bar k}=0.2$.
We notice that the price converges rapidly and stabilizes to the same value as the Monte-Carlo estimate.

\subsection{Greeks and Calibration} \label{sec:calib}

We show that Greeks can also be accurately approximated and that parameter sensitivity analysis as well as gradient based optimization are enabled using polynomial orthogonal expansions.
Furthermore, we show that the CPU time required to approximate option prices and gradients can be even faster than using analytic expressions when sufficiently many contracts are considered.
We use the Heston model as analytic formulas are readily available for the option prices and the Greeks, see Appendix~\ref{sec:Heston}.

The dynamics of $(X_t,Y_t)$ in the \cite{heston1993closed} model has the form
\begin{equation}\label{eq:heston}
\begin{split}
dY_t &= \kappa(\theta - Y_t) dt + \sigma \sqrt{Y_t} dW_{1t} \\
dX_t &= \left(r-\delta -  Y_t/2\right)dt + \rho \sqrt{Y_t} dW_{1t} + \sqrt{1-\rho^2} \sqrt{Y_t} dW_{2t}
\end{split}
\end{equation}
for some positive parameters $\kappa$, $\theta$, and $\sigma$, and with $\rho\in(-1,1)$.
The process $Y_t$ follows a square-root diffusion.
Applying the IJK scheme we obtain the following discretization for the dynamics of the process $Y_t$
\[
Y_{t_{i+1}} = Y_{t_{i}} + \kappa(\theta - Y_{t_{i}})\Delta t + \sigma \sqrt{Y_{t_i}} \Delta W_{1,t_{i+1}} + \frac{1}{4} \rho^2 (\Delta W_{1,t_{i+1}}^2 - \Delta t)\quad i=0,\dots,n-1
\]
for the conditional mean
\[
M_T = X_0 -\frac{1}{2} \sum_{i=1}^n \frac{Y_{t_{i}}+Y_{t_{i-1}}}{2}\Delta t + \rho \sum_{i=1}^n \sqrt{Y_{t_{i}}} \Delta W_{1,t_i} + \frac{1}{4}\rho\sigma \sum_{i=1}^n (\Delta W_{1,t_i}^2 - \Delta t)
\]
and for the conditional variance
\[
C_T = (1-\rho^2) \sum_{i=1}^n \frac{Y_{t_{i}}+Y_{t_{i-1}}}{2}\Delta t.
\]

We approximate option prices using the same method as in Section~\ref{sec:num_pricing} for the Stein--Stein and Hull-White models.
We use a Gaussian mixture with 21 components and such that the 20-th moment of the auxiliary density and the log price are matching.
The reference parameters are $r=\delta=X_0=0$, $\kappa=0.5=-\rho$, $\theta=Y_0=0.04$, $\sigma=0.5$, and $T=1/12$.
We use a step size of $h=10^{-6}$ in~\eqref{eq:deAtv} to compute the parameter sensitivities.

\paragraph{Delta and Gamma.}

The upper panels in Figure~\ref{fig:greeks} display the option price, the Delta, and the Gamma for various strikes computed using the analytic formulas and the orthogonal polynomial expansions as in Example~\ref{exa:delgam}.
We observe that the exact and approximated values are almost perfectly equal, with minor errors for the Gamma at the money.
This is supported by the lower panels in Figure~\ref{fig:greeks} which show that the differences between the two approaches are in general negligible, and differ not more than few percents around the money.

\paragraph{Gradient computation.}
We fix the log strike ${\bar k}=0$.
We illustrate the gradient's behavior of the squared price error for different values of $\theta$ and $\rho$.
The squared price error is defined by
\[
\text{Squared Error} = \left((\pi(\Theta) - \pi(\tilde \Theta)^{(N)}\right)^2
\]
where $\pi(\Theta)$ denotes the option price for the reference parameters $\Theta$, and $\pi(\tilde \Theta)^{(N)}$ denotes the approximated option price for the parameters $\tilde \Theta$.
Panels on the first row of Figure~\ref{fig:parsens} display the option price for different values of $\theta$ and $\rho$.
Panels on the second row show the derivative of the squared price error with respect to $\theta$ and to $\rho$.
We observe that the parameter sensitivities are smooth, and provide the correct direction toward which the parameter should change in order to reduce the square price error.
Although in principle possible, differentiating the option price in the Heston model with respect to a given parameters appears to be difficult and not practical, see~Appendix~\ref{sec:Heston}.

\paragraph{Calibration CPU time.}

The left panel of Figure~\ref{fig:cpu} displays the total CPU time required to compute or approximate option prices for different number of call options having the same maturity.
We observe that there is an upfront CPU time cost when using orthogonal polynomial expansions which is corresponds to constructing the polynomial basis and to computing the moment coefficients.
However, once these elements are available, approximating an additional option price boils down computing an additional series of payoff coefficients which is done swiftly.
Therefore, if one needs to compute more than 100 option prices of a given maturity, then the orthogonal polynomial expansion approach is at least as efficient as using the Fourier transform formula for the Heston model.

The right panel of Figure~\ref{fig:cpu} displays the total CPU time required to approximate the option price sensitivity with respect to $\kappa$, $\theta$, $\sigma$, and $\rho$ for different number of call options having the same maturity.
More precisely, we compare the approximation formula~\eqref{eq:greeks} with the finite difference method for which prices are computed using the Fourier transform formula in Appendix~\ref{sec:Heston}.
The orthogonal polynomial expansion approach performs better than the finite difference method for the Heston model when working with more than 30 options and, in addition, provide by design an accurate evaluation of the gradient.

\subsection{Payoff Coefficients Approximation}

We show that the weighted least squares approach, as discussed in Section~\ref{sec:wls}, can accurately approximate the payoff coefficients and can thus be a workable solution for option pricing applications.
We used the same configuration as in~\citep{ackerer2018jacobi} for easier comparison, that is, a Jacobi model with the parameters: $r=\delta=x_0=0$, $\kappa=0.5$, $\theta=Y_0=0.04$, $\sigma=1$, $y_{min}=10^{-4}$, $y_{max}=0.08$.
The call option has maturity $T=1/12$ and log strike ${\bar k}=0$.
The forward start call option starts at $t_1=1/12$, matures at $t_2=5/12$, and its log strike is ${\bar k}=0$.

Panels on the first row of Figure~\ref{fig:payoffapp} show the exact and approximated payoff coefficients for a call option and a forward start call option.
The auxiliary densities are respectively a Gaussian density with parameters $\mu_w=\E[X_T]$ and $\sigma^2_w={\rm Var}[X_T]$, and a tensor product of two Gaussian densities with parameters $\mu_{w_i}=\E[X_{t_i}-X_{t_{i-1}}]$ and $\sigma_{w_i}={\rm Var}[X_{t_i}-X_{t_{i-1}}]$ with $t_0=0$.
To approximate the payoff coefficients we used the optimal weighted least square method described in~\cite{cohen2017optimal} with $10^5$ simulated random variables.
We observe that all the payoff coefficients with an absolute value larger than about $10^{-6}$ are well approximated for the call option and for the forward start call option.
There are typically many payoff coefficients with negligible weights in multivariate applications, meaning that their values are so small that they do not affect the option price approximation to a significant extend.
Indeed, on the second row of Figure~\ref{fig:payoffapp} we observe that the approximated option prices computed with the exact and the approximated payoff coefficients are almost indistinguishable.

\section{Conclusion}\label{sec:ccl}

We derived tractable option price series representations for polynomial stochastic volatility models when the auxiliary density is a mixture density. In particular, we presented efficient methods to specify an auxiliary Gaussian mixture density and indicated that accurate option price and Greek approximations are possible even if the likelihood ratio function of the auxiliary density does not belong to the corresponding weighted Lebesgue space. We provided several numerical examples that illustrate the good performance of our approach. We also discussed the extension of our approach to pricing problems in higher dimensions.

\newpage
\begin{appendix}

\section{Proofs} \label{sec:MixProofs}

This Appendix contains the proofs of all theorems and propositions in the main text.

\subsection*{Proof of Proposition~\ref{prop:ONBmix}}
This proof is based on the results in~\cite{fischer1992generate}.
We first aim to derive a series of orthogonal monic polynomial basis $\{h_n\}_{n\ge 0}$, that is whose leading order coefficient is equal to one, and then normalize it to obtain the desired basis $\{H_n\}_{n\ge 0}$.
The recurrence relation for the orthogonal monic basis is given by
\begin{equation} \label{eq:ONBmonic}
x h_{n}(x) = h_{n+1}(x) + \alpha_n h_{n}(x) + \gamma_n h_{n-1}(x)
\end{equation}
for all $n\ge 0$ with $h_{-1}=0$ and $h_{0}=1$, and where the coefficients $\alpha_n,\gamma_n$ are given by
\begin{equation} \label{eq:ONBmonic:coefs}
\alpha_n = \frac{ \langle h^*_n, h_n \rangle_w }{\langle h_n, h_n \rangle_w} \quad \text{and} \quad
\gamma_n = \frac{ \langle h_n, h_n \rangle_w }{\langle h_{n-1}, h_{n-1} \rangle_w}
\end{equation}
for $n\ge 0$, and where $h^*_n(x)=x h_n(x)$.
The orthonormal polynomial basis is then obtained by normalizing the orthogonal monic basis, that is
\[
H_n(x) = \frac{h_n(x)}{\sqrt{\langle h_n, h_n \rangle_w }},
\]
which in view of Equation~\eqref{eq:ONBmonic} is equivalent to define the recurrence coefficients as follows
\[
a_{n} = \alpha_n \quad \text{and} \quad b_n = \sqrt{\gamma_n}.
\]

The inner products in Equation~\eqref{eq:ONBmonic:coefs} are left to be computed.
We show that one can actually compute effectively and accurately the integral $\langle p, 1 \rangle_w$ for any polynomial of order less than $2N$.
First, we recall the Gauss quadrature rule associated with the density $v_k$
\[
\langle p, 1 \rangle_{v_k} = \int_{\R} p(x)v_k(x)dx = \sum_{i=0}^N (\nu_{i1}^k)^2 \, p(\lambda_i^k)
\]
where $\nu_{i}^k$ is the eigenvector corresponding to the eigenvalue $\lambda_i^k$ of the Jacobi matrix $J_N^k$.
Those values however do not need to be computed explicitly.
Observe that, the matrix $J_N^k$ being Hermitian, there exists a unitary matrix $U_N^k$ whose columns are the normalized eigenvectors of $J_N^k$ and such that
\[
\Sigma_N^k := \diag(\lambda_0^k, \dots, \lambda_N^k) = (U_N^k)^\top J_N^k \, U_N^k.
\]
Combining the above results we obtain
\begin{align*}
\langle p, 1 \rangle_{w} & = \sum_{k=1}^K c_k \, \sum_{i=0}^N (\nu_{i1}^k)^2 \, p(\lambda_i^k) = \sum_{k=1}^K c_k \, \e_1^\top \, U_N^k \, p(\Sigma_N^k) (U_N^k)^\top \,  \e_1 \\
&= \sum_{k=1}^K c_k \, \e_1^\top \, p(J_N^k) \, \e_1.
\end{align*}
Define the vector $z_n^k$ as follows
\begin{align*}
z_{n+1}^k &= h_{n+1}(J_N^k)\e_1 = (J_N^k - \alpha_n)h_n(J_N^k) \e_1 - \gamma_n h_{n-1}(J_N^k)\e_1 \\
&=(J_N^k - \alpha_n)z_n^k \e_1 - \gamma_n z_{n-1}^k \e_1
\end{align*}
where the second equality follows from Equation~\eqref{eq:ONBmonic}.
The inner products then rewrites
\begin{align*}
\langle h_{n}, h_{n} \rangle_{w} &= \sum_{k=1}^K c_k \, \e_1^\top \, h_{n}(J_N^k)^\top \, h_{n}(J_N^k)^\top \, \e_1  = \sum_{k=1}^K c_k \,(z_n^k)^\top \, z_k
\end{align*}
and similarly
\begin{align*}
\langle h^*_{n}, h_{n} \rangle_{w} &= \sum_{k=1}^K c_k \, \e_1^\top \, h_{n}(J_N^k)^\top \, (J_N^k)^\top \, h_{n}(J_N^k)^\top \, \e_1  = \sum_{k=1}^K c_k \, (z_n^k)^\top \, J_N^k \, z_k.
\end{align*}

\subsection*{Proof of Proposition~\ref{prop:fnMIX}}

The proof follows several elementary steps
\begin{align*}
f_N &= \int_\R f(x)H_N(x) w(x)dx = \int_\R f(x)H_N(x) \sum_{k=1}^K \, c_k \, v_k(x) dx \\
 &= \sum_{k=1}^K \, c_k \, \int_\R f(x) H_N(x) v_k(x) dx
 = \sum_{k=1}^K  \sum_{n=0}^N \, c_k \, \int_\R q_{N,n}^k  \, f(x) H^k_n(x) v_k(x) dx \\
 &= \sum_{k=1}^K \sum_{n=0}^N \, c_k \, q_{N,n}^k \, \langle f, H_n^k \rangle_{v_k}
\end{align*}
which proves~\eqref{eq:fnMIX} and where the third line results from~\eqref{eq:Hnbasisk} which gives the representation of the polynomial $H_N(x)$ in the polynomials basis $H_n(x)$.

\subsection*{Proof of Proposition~\ref{PROP:lag:coefs}}

Part~\ref{PROP:lag:coefs1}: We want to compute
\[
\e^{rT}f_n = \int_\R (\e^x -e^{\bar k})^+ L_n(x) v_k(x)dx =  \frac{\sqrt{n!}}{\sqrt{\Gamma(\alpha+n)} \Gamma(\alpha)} \int_{\mu}^{\infty} (\e^{\xi + \frac{x}{\beta}} - \e^{\bar k})\Lcal^{\alpha-1}_n(x) \, x^{\alpha-1}\e^{-x}dx
\]
by a change of variable $y=\beta(x-\xi)$, with $\mu=\max(0,\beta({\bar k}-\xi))$ and $v_k(x)$ as in~\eqref{eq:Gammadens}.
We first show that
\[
I_n^{\alpha-1}(\mu;\nu) = \int_\mu^\infty \e^{\nu x} \Lcal^{\alpha-1}_n(x) \, x^{\alpha-1}\e^{-x}dx
\]
satisfies the recursive system~\eqref{eq:Inrecur}.
This directly follows from the recursive relations
\[
\Lcal^{\alpha-1}_n(x) = \left(2+\frac{\alpha - 2}{n} \right) \Lcal^{\alpha-1}_{n-1}(x) - \frac{1}{n} x \, \Lcal^{\alpha-1}_{n-1}(x) - \left(1 + \frac{\alpha - 2}{n} \right)\Lcal^{\alpha-1}_{n-2}(x)
\]
and the three-point rule
\[
\Lcal^{\alpha-1}_n(x) = \Lcal^{\alpha}_n(x) - \Lcal^{\alpha}_{n-1}(x)
\]
such that we obtain
\[
I_n^{\alpha-1}(\mu;\nu) = \left(2+\frac{\alpha - 2}{n} \right)I_{n-1}^{\alpha-1}(\mu;\nu) - \left(1 + \frac{\alpha - 2}{n} \right)I_{n-2}^{\alpha-1}(\mu;\nu) - \frac{1}{n}\left(I_{n-1}^{\alpha}(\mu;\nu) -I_{n-2}^{\alpha}(\mu;\nu) \right).
\]
We conclude by computing
\begin{align*}
I_0^{\alpha-1}(\mu;\nu) &= \int_\mu^\infty x^{\alpha-1} \e^{-(1-\nu)x} dx = \frac{\Gamma(\alpha)}{(1-\nu)^\alpha} \int_\mu^\infty \frac{(1-\nu)^\alpha}{\Gamma(\alpha)} x^{\alpha-1} \e^{-(1-\nu)x} dx \\
&= \frac{\Gamma(\alpha)}{(1-\nu)^\alpha} \left( 1 - \int_0^\mu \frac{(1-\nu)^\alpha}{\Gamma(\alpha)} x^{\alpha-1} \e^{-(1-\nu)x})dx \right) = (1-\nu)^{-\alpha} \left(\Gamma(\alpha) - \Gamma(\alpha,\mu(1-\nu))\right)
\end{align*}
and as $\Lcal_1^{\alpha-1}(x)=(\alpha + x)$ we get that
\[
I_1^{\alpha-1}(\mu;\nu) = \int_\mu^\infty (\alpha + x)\e^{-(1-\nu)x}x^{\alpha-1}dx
= \alpha I_0^{\alpha-1}(\mu;\nu) + I_0^{\alpha}(\mu;\nu)
,\]
which completes the proof of part~\ref{PROP:lag:coefs1}.

Part~\ref{PROP:lag:coefs2}: The Fourier coefficient $f_n$ is given by the expression,
\begin{align*}
\e^{rT}f_n = \int_\R (x-{\bar k})^+ H^k_n(x) v_k(x)dx &=  \int_{\R} (x-{\bar k})^+ H^k_n(x) \frac{\beta^\alpha}{\Gamma(\alpha)} (x-\xi)^{\alpha-1}\e^{-\beta(x-\xi)}dx \\
&= \int_{{\bar k}-\xi}^\infty (y+\xi-{\bar k})H_n(y+\xi) \frac{\beta^\alpha}{\Gamma(\alpha)} y^{\alpha-1}\e^{-\beta y}dy
\end{align*}
where the second equality follows from the change of variable $y=x-\xi$.
We now derive an explicit formula for these integrals which depends on the incomplete Gamma function,
\begin{align*}
\int_{{\bar k}-\xi}^\infty \frac{\beta^\alpha}{\Gamma(\alpha)} y^{j + \alpha-1}\e^{-\beta y}dy &= \frac{1}{\Gamma(\alpha) \beta^j} \int_{\beta({\bar k}-\xi)}^\infty x^{j+\alpha-1}\e^{-x}dx = \frac{\Gamma(\alpha + j, \beta({\bar k}-\xi))}{\Gamma(\alpha) \beta^j}, \quad j\ge 0
\end{align*}
which completes the proof Proposition~\ref{PROP:lag:coefs}.

\subsection*{Proof of Equation~\eqref{eq:gammix_coefs}}

The Gamma density $v_k(x)$ admits an ONB $H_n^k(x)$ given by
\[
H_n^k(x) = \sqrt{\frac{n!}{\Gamma(\alpha+n)}} \Lcal_n^{\alpha-1}(\beta(x-\xi))
\]
where the generalized Laguerre polynomials are recursively given by
\begin{align*}
\Lcal_0^{\alpha-1}(x) &= 1 \\
\Lcal_1^{\alpha-1}(x) &= \alpha + x \\
\Lcal_{n+1}^{\alpha-1}(x) &= \frac{2n + \alpha - x}{n+1} \Lcal_n^{\alpha-1}(x) - \frac{n+\alpha-1}{n+1} \Lcal_{n-1}^{\alpha-1}(x).
\end{align*}
We can rearrange the above expression to get
\[
x \Lcal_n^{\alpha-1}(x) = - (n+1)\Lcal_{n+1}^{\alpha-1}(x) + (2n+\alpha)\Lcal_n^{\alpha-1}(x)  - (n+\alpha-1) \Lcal_{n-1}^{\alpha-1}(x).
\]
Applying the change of variable $x=\beta(y-\xi)$ we get
\[
\beta(y-\xi) \Lcal_n^{\alpha-1}(y) = - (n+1)\Lcal_{n+1}^{\alpha-1}(y) + (2n+\alpha)\Lcal_n^{\alpha-1}(y)  - (n+\alpha-1) \Lcal_{n-1}^{\alpha-1}(y).
\]
which by rearrangement gives
\[
y \Lcal_n^{\alpha-1}(y) = - \frac{n+1}{\beta} \Lcal_{n+1}^{\alpha-1}(y) + \frac{2n+\alpha + \beta\xi}{\beta}\Lcal_n^{\alpha-1}(y)  - \frac{n+\alpha-1}{\beta} \Lcal_{n-1}^{\alpha-1}(y).
\]
Expressing the above in terms of the basis $H_n^k(y)$ gives
\begin{align*}
y H_n(y) &= -\sqrt{\frac{\Gamma(\alpha + n + 1)}{\Gamma(\alpha + n) (n+1)}} \frac{n+1}{\beta} H_{n+1}(y) \\
& \quad + \frac{2n+\alpha + \beta\xi}{\beta}H_n(y)  \\
& \quad - \sqrt{\frac{\Gamma(\alpha + n - 1) n }{\Gamma(\alpha + n)}} \frac{n+\alpha-1}{\beta} H_{n-1}(y).
\end{align*}
Using the relation $\Gamma(\alpha+n+1)/\Gamma(\alpha+n)=\alpha+n$ we get
\begin{align*}
y H_n(y) &= -\frac{\sqrt{(\alpha+n)(n+1)}}{\beta} H_{n+1}(y) \\
& \quad + \frac{2n+\alpha + \beta\xi}{\beta}H_n(y)  \\
& \quad - \frac{\sqrt{(n+\alpha-1)n}}{\beta} H_{n-1}(y).
\end{align*}

\subsection*{Proof of Equations~\eqref{eq:VXcor}--\eqref{eq:volvol}}

The dynamics of the variance $V_t=d\langle X \rangle_t/dt=\Sigma_1(Y_t)^2 + \Sigma_2(Y_t)^2 $ is of the form
\[
dV_t = (\cdots)dt + 2\left( \Sigma'_1(Y_t)\Sigma_1(Y_t) + \Sigma'_2(Y_t)\Sigma_2(Y_t) \right)\, \sigma(Y_t) \, dW_{1t}.
\]
The quadratic covariation between $X_t$ and $V_t$ is therefore given by
\[
d\langle X,V \rangle_t = 2 \left( \Sigma'_1(Y_t)\Sigma_1(Y_t) + \Sigma'_2(Y_t)\Sigma_2(Y_t) \right)  \, \sigma(Y_t) \, \Sigma_1(Y_t) dt
\]
and the quadratic variation of $V_t$ by
\[
d\langle V \rangle_t = 4\left( \Sigma'_1(Y_t)\Sigma_1(Y_t) + \Sigma'_2(Y_t)\Sigma_2(Y_t) \right)^2 \, \sigma(Y_t)^2 dt.
\]
Equation~\eqref{eq:VXcor} directly follows by observing that
\[
\frac{ 2 \left(\Sigma'_1(Y_t)\Sigma_1(Y_t) + \Sigma'_2(Y_t)\Sigma_2(Y_t)\right)\, \sigma(Y_t)}{\sqrt{4 \left( \Sigma'_1(Y_t)\Sigma_1(Y_t) + \Sigma'_2(Y_t)\Sigma_2(Y_t) \right)^2\, \sigma(Y_t)^2}} = \sign\left[\left(\Sigma'_1(Y_t)\Sigma_1(Y_t) + \Sigma'_2(Y_t)\Sigma_2(Y_t)\right) \sigma(Y_t)\right],
\]
and Equation~\eqref{eq:volvol} follows from the above and
\[
d\langle \sqrt{V} \rangle_t = \frac{d \langle V \rangle_t}{4 V_t}.
\]

\subsection*{Proof of Proposition~\ref{prop:condGM}}

Conditional on the trajectory of $W_{1t}$ on $[0,T]$, the trajectory of $Y_t$ is observable and $W_{2t}$ is the only source of randomness in the dynamics of $X_T$ which is thus equivalent to the dynamics of a Gaussian process with time varying parameters.
Hence, its conditional distribution is given by a normal distribution with mean $M_T$ and variance $C_T$ as in~\eqref{eq:MT} and~\eqref{eq:CT}.
Taking expectation gives~\eqref{eq:gGMdens}

\subsection*{Proof of Proposition~\ref{prop:GMparam}}

The first part of the proposition follows from similar arguments as in~\cite[Theorem~3.1]{ackerer2018jacobi}.
Note that it is sufficient to consider only the component of $w$ with the largest variance parameter.

For the second part, it is clear that $C_T$ is bounded when $Y_t$ takes values in a compact interval $I$ because $\Sigma_1 ^2 + \Sigma_2 ^2$ is a polynomial, and thus $\Sigma_1^2$ is bounded on $I$. The random variable $M_T$ is equivalently given by the expression
\[
M_T = X_0 + (r-\delta)T - \frac{1}{2} \int_0^T \left(\Sigma_1(Y_t)^2 + \Sigma_2(Y_t)^2\right)dt + \int_0^T \phi(Y_t) (dY_t - \kappa(\theta - Y_t)dt).
\]
Let $\Phi\in C^2(I)$ be a primitive function of $\phi$, so that $\Phi'=\phi$, on $I$. Applying Ito's lemma to $\Phi(Y_t)$ we obtain
\begin{align*}
 \int_0^T \phi(Y_t) dY_t &= \Phi(Y_T)-\Phi(Y_0) -  \frac{1}{2}  \int_0^T \phi'(Y_t) \sigma(Y_t)^2 dt
\end{align*}
which is uniformly bounded. Hence so is $M_T$.

\subsection*{Proof of Proposition~\ref{prop:deGTv}}

The proof immediately results from the following Lemma which reduces the problem of computing the exponential of a complex matrix into computing the exponential of a real-valued matrix.

\begin{lemma} \label{lem:matexp}
Let $A$ and $B$ be two $n\times n$ real matrices, then
\begin{equation}\label{eq:BSSmatexp}
\exp\begin{pmatrix}
A & -B\\
B & A
\end{pmatrix}
=
\begin{pmatrix}
R & -S \\
S & R
\end{pmatrix}
\end{equation}
for some $n\times n$ real matrices $R$ and $S$.
Furthermore we have $\exp(A + \im B) = R + \im S$.
\end{lemma}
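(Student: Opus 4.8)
The plan is to exhibit the standard faithful embedding of complex $n\times n$ matrices into real $2n\times 2n$ matrices that realizes multiplication by $\im$ as the block structure in the statement, and then to exploit that this embedding is a unital $\R$-algebra homomorphism, so that it intertwines the two matrix exponentials. Concretely, I would define the map $\Phi$ sending a complex matrix $C=\Re C+\im\,\Im C$ to
\[
\Phi(C)=\begin{pmatrix}\Re C & -\Im C\\ \Im C & \Re C\end{pmatrix}.
\]
It is immediate that $\Phi$ is $\R$-linear and sends the $n\times n$ identity to the $2n\times 2n$ identity.

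The key step is to verify that $\Phi$ is multiplicative. Writing $C_1=A_1+\im B_1$ and $C_2=A_2+\im B_2$ with $A_j,B_j$ real, a direct block multiplication gives
\[
\Phi(C_1)\Phi(C_2)=\begin{pmatrix}A_1A_2-B_1B_2 & -(A_1B_2+B_1A_2)\\ A_1B_2+B_1A_2 & A_1A_2-B_1B_2\end{pmatrix},
\]
whose blocks are exactly the real and imaginary parts of $C_1C_2=(A_1A_2-B_1B_2)+\im(A_1B_2+B_1A_2)$, so that $\Phi(C_1)\Phi(C_2)=\Phi(C_1C_2)$.

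Since $\Phi$ is a unital $\R$-algebra homomorphism it sends powers to powers, $\Phi(C^k)=\Phi(C)^k$, and by linearity and continuity it passes through the exponential series term by term:
\[
\exp(\Phi(C))=\sum_{k\ge0}\frac{1}{k!}\Phi(C)^k=\Phi\!\left(\sum_{k\ge0}\frac{1}{k!}C^k\right)=\Phi(\exp C).
\]
Applying this to $C=A+\im B$, for which $\Phi(C)$ is precisely the block matrix in~\eqref{eq:BSSmatexp}, and writing $\exp(A+\im B)=R+\im S$ with $R=\Re\exp(A+\im B)$ and $S=\Im\exp(A+\im B)$ real, yields
\[
\exp\begin{pmatrix}A & -B\\ B & A\end{pmatrix}=\Phi(\exp(A+\im B))=\begin{pmatrix}R & -S\\ S & R\end{pmatrix},
\]
which establishes both assertions simultaneously.

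I do not anticipate a genuine obstacle; the content is essentially the observation that the displayed block form is the faithful representation of $\C^{n\times n}$ on $\R^{2n}$. The only point deserving a word of care is the interchange of $\Phi$ with the infinite sum defining $\exp$, which is justified because $\Phi$ is a fixed linear map between finite-dimensional spaces, hence continuous, and the exponential series converges absolutely.
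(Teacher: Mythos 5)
Your proof is correct, and it differs from the paper's in a way worth noting. Both arguments share the same computational core: the block multiplication showing that products of matrices of the form $\begin{pmatrix} A & -B\\ B & A\end{pmatrix}$ retain that form, combined with term-by-term passage through the exponential series. The paper, however, uses this only to establish the \emph{existence} of real $R,S$ with the stated block structure, and then identifies $R+\im S$ with $\exp(A+\im B)$ by a separate argument: it considers the linear ODE $dY_t = \begin{pmatrix} A & -B\\ B & A\end{pmatrix} Y_t\,dt$ with initial condition $\begin{pmatrix} I_n\\ 0_n\end{pmatrix}$, observes that $X_t = \begin{pmatrix} I_n & \im I_n\end{pmatrix} Y_t$ solves $dX_t = (A+\im B)X_t\,dt$ with $X_0 = I_n$, and evaluates both flows at $t=1$. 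You instead package the block computation as the statement that $\Phi(C) = \begin{pmatrix}\Re C & -\Im C\\ \Im C & \Re C\end{pmatrix}$ is a unital $\R$-algebra homomorphism, so that $\exp(\Phi(C)) = \Phi(\exp C)$ follows at once; applied to $C = A+\im B$ this delivers both the block structure \emph{and} the identification $R = \Re\exp(A+\im B)$, $S = \Im\exp(A+\im B)$ in a single stroke, with no auxiliary ODE needed. Your route is the more economical and conceptually standard one (it is the faithful representation of $\C^{n\times n}$ on $\R^{2n}$); the paper's ODE detour is natural in its context, where matrix exponentials arise throughout as solutions of linear ODEs, but it is strictly more machinery than the statement requires. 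Your attention to the continuity of $\Phi$ when interchanging it with the infinite sum is exactly the right point to flag, and it is handled correctly.
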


\begin{proof}[Proof of Lemma~\ref{lem:matexp}]
First observe that for any $n \times n$ matrices $C$ and $D$ we have
\[
\begin{pmatrix}
A & -B\\
B & A
\end{pmatrix}
\begin{pmatrix}
C & -D\\
D & C
\end{pmatrix}
=\begin{pmatrix}
AC - BD & -AD - BC\\
AD + BC & AC - BD
\end{pmatrix}
\]
so that the block skew-symmetric structure is preserved.
The Equation~\eqref{eq:BSSmatexp} then follows from the  power series representation of the matrix exponential.
Consider the deterministic matrix-valued process $Y_t$, whose dynamics is given by
\[
dY_t =
\begin{pmatrix}
A & -B\\
B & A
\end{pmatrix}
Y_t \, dt, \quad
Y_0 =
\begin{pmatrix}
I_n \\
0_n
\end{pmatrix},
\]
and the matrix-valued process
$X_t= \begin{pmatrix} I_n & \im I_n \end{pmatrix} Y_t$
so that
$dX_t = (A + \im B)X_t \, dt$ and $X_0 = I_n$.
Then we have
\[
Y_1 = \begin{pmatrix}
R & -S \\
S & R
\end{pmatrix}
\begin{pmatrix}
I_n \\
0_n
\end{pmatrix} =
\begin{pmatrix}
R \\
S
\end{pmatrix}
\]
and, from the definitions of $X_t$ and $Y_t$, it follows that
\[
X_1 = \exp(A + \im B) =
\begin{pmatrix}
I_n & \im I_n
\end{pmatrix} Y_1
=R + \im S.
\]
\end{proof}

\section{Basis Construction with Moments} \label{sec:ONBmts}

In this Appendix we present moment-based constructions, alternative to Proposition~\ref{prop:ONBmix}, for the orthonormal basis (ONB) $H_n(x)$ of the space $L_w^2$ which can also be used when the auxiliary density $w$ is $d$-valued.
Let $\pi:\Ecal\rightarrow\{1,\dots,M\}$ be an enumeration of the set of exponents
\[
\Ecal=\left\{ n\in\N^d : |n| \le N \right\}
\]
for some positive integer $N$, with $\pi(0)=1$, and such that $\pi(n) \le \pi(m)$ if $|n|\le|m|$.
We denote $\pi_i=\pi^{-1}(i)\in\N^d$ where $\pi^{-1}:\{1,\dots,M\}\rightarrow \Ecal$ is the inverse function of $\pi$.

A standard approach to construct the ONB is to apply the Gram-Schmidt algorithm outlined below.
First one constructs the orthogonal basis
\begin{align*}
u_0(x) & = 1 \\
u_i(x) &= x^{\pi_i} - \sum_{j=0}^{i-1} \frac{\langle x^{\pi_i}, u_j \rangle_w}{\langle u_j, u_j \rangle_w} \, u_j(x), \quad i\ge 1
\end{align*}
and the ONB is obtained by normalization,
\[
H_i(x) = \frac{u_i(x)}{\lVert u_i \rVert_w}, \quad i\ge 0.
\]

Another interesting approach to construct an ONB of the space $L_w^2$ is as follows.
Let $\bf M$ denote the $(M\times M)$ Gram matrix defined by
\begin{equation}\label{eq:Grammat}
{\bf M}_{i+1,\,j+1} = \langle x^{\pi_i}, x^{\pi_j} \rangle_w
\end{equation}
which is thus symmetric and positive definite.
Let ${\bf M= LL^\top}$ be the unique Cholesky decomposition of ${\bf M}$ where ${\bf L}$ is a lower triangular matrix, and defined the lower triangular ${\bf S} = {\bf L}^{-1}$.

\begin{theorem}[\cite{mysovskikh1968construction}]
The polynomials
\[
H_{i}(x) = \sum_{j=0}^i {\bf S}_{i+1,j+1} \; x^{\pi_j}
\]
form an ONB of $L_w^2$.
\end{theorem}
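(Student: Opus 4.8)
The plan is to recognize that the entire construction is a single matrix identity, so that orthonormality reduces to the defining relation of the Cholesky factorization. First I would stack the monomials into the vector $\mathbf{m}(x) = (x^{\pi_0},\dots,x^{\pi_{M-1}})^\top$ and the candidate polynomials into $\mathbf{H}(x) = (H_0(x),\dots,H_{M-1}(x))^\top$. Since the $(i+1,j+1)$ entry of $\mathbf{S}$ is precisely the coefficient of $x^{\pi_j}$ in $H_i$, and since $\mathbf{S} = \mathbf{L}^{-1}$ is lower triangular so that $\mathbf{S}_{i+1,j+1}=0$ for $j>i$, the stated formula is exactly $\mathbf{H}(x) = \mathbf{S}\,\mathbf{m}(x)$, with the sum defining each $H_i$ correctly running only up to $j=i$.

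Next I would compute the Gram matrix of the family $\{H_i\}$ directly. By bilinearity of $\langle\cdot,\cdot\rangle_w$ and the definition~\eqref{eq:Grammat} of $\mathbf{M}$,
\[
\langle H_i, H_k\rangle_w = \sum_{j,l} \mathbf{S}_{i+1,j+1}\,\langle x^{\pi_j}, x^{\pi_l}\rangle_w\,\mathbf{S}_{k+1,l+1} = \left(\mathbf{S}\,\mathbf{M}\,\mathbf{S}^\top\right)_{i+1,k+1}.
\]
Substituting $\mathbf{M} = \mathbf{L}\mathbf{L}^\top$ and $\mathbf{S} = \mathbf{L}^{-1}$ gives $\mathbf{S}\,\mathbf{M}\,\mathbf{S}^\top = \mathbf{L}^{-1}\mathbf{L}\mathbf{L}^\top(\mathbf{L}^\top)^{-1} = I$, so that $\langle H_i,H_k\rangle_w = \delta_{ik}$. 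This is the orthonormality assertion; note that the existence and uniqueness of $\mathbf{L}$ with positive diagonal, hence the invertibility used here, is guaranteed by the positive definiteness of $\mathbf{M}$ stated in the excerpt.

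It remains to argue that $\{H_i\}$ is in fact a basis. Because $\mathbf{S}$ is lower triangular and invertible, with nonzero diagonal entries, the change of basis from $\{x^{\pi_j}\}$ to $\{H_i\}$ is invertible; together with the degree ordering imposed on the enumeration $\pi$, this shows that $\{H_0,\dots,H_{M-1}\}$ spans the same space as the monomials $\{x^{\pi_0},\dots,x^{\pi_{M-1}}\}$, namely $\Pol_N$. Since the construction produces such an orthonormal family for every $N$ and the polynomials are, by assumption, dense in $L_w^2$, the resulting system is complete, and therefore an orthonormal basis of $L_w^2$.

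The computation itself is routine linear algebra; the only points requiring care are the index bookkeeping induced by the $(i+1,j+1)$ offset between the matrix entries and the exponent enumeration, and the final completeness step, where one must invoke the density of polynomials to upgrade the finite orthonormal set to a basis of the full Hilbert space rather than merely of $\Pol_N$.
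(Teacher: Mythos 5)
Your proof is correct, but note that the paper itself offers no argument for this statement: it is quoted verbatim from \cite{mysovskikh1968construction}, and the proofs appendix covers only the propositions of the main text. So there is no internal proof to compare against; what you have written is the standard self-contained justification, and it is sound. The core identity $\langle H_i,H_k\rangle_w = \left(\mathbf{S}\,\mathbf{M}\,\mathbf{S}^\top\right)_{i+1,k+1} = \left(\mathbf{L}^{-1}\mathbf{L}\mathbf{L}^\top(\mathbf{L}^\top)^{-1}\right)_{i+1,k+1} = \delta_{ik}$ is exactly the right reduction, and your use of positive definiteness of $\mathbf{M}$ to guarantee the invertible Cholesky factor, and of triangularity of $\mathbf{S}$ to identify the stated truncated sum with the matrix product $\mathbf{S}\,\mathbf{m}(x)$, are both the points that need to be made.

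One step deserves tightening: the passage from ``an orthonormal family for every $N$'' to ``an ONB of $L_w^2$.'' For a fixed $N$ the construction yields only $M=\dim\Pol_N$ polynomials, which form an ONB of $\Pol_N$, not of $L_w^2$; to let $N\to\infty$ you need the families produced at successive truncation orders to be \emph{consistent}, i.e.\ the order-$(N+1)$ construction must reproduce the order-$N$ polynomials in its first $M$ entries, otherwise the union over $N$ need not be an orthonormal system at all. This consistency does hold here, but it is a fact worth stating: because the enumeration $\pi$ is degree-compatible, the Gram matrix at order $N$ is the leading principal submatrix of the one at order $N+1$, and the Cholesky factor (and its inverse) of a leading principal submatrix is the corresponding leading principal submatrix of the Cholesky factor (respectively of its inverse). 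Equivalently, triangular orthonormalization with positive diagonal of a fixed ordered sequence is unique. With that observation inserted, the union of the families is a single orthonormal system spanning all polynomials, and the standing density assumption of the paper upgrades it to an ONB of $L_w^2$, completing your argument.
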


\begin{remark}
The orthonormal basis resulting from the classical Gram-Schmidt implementation described above may not appear orthogonal numerically because of rounding errors, so the procedure is said to be numerically unstable.
To alleviate this issue the modified Gram-Schmidt implementation is often preferred in practice, the polynomial $u_i(x)$ is now computed in multiple steps
\[
u_i^{(j+1)}(x) = u_i^{(j)}(x) - \frac{\langle u_i^{(j)}(x), u_j \rangle_w}{\langle u_j, u_j \rangle_w} \, u_j(x), \quad j=0,\dots,i-1
\]
with $u_i^{(0)}(x) = x^{\pi_i}$ and such that $u_i(x)=u_i^{(i)}(x)$.
Although the two algorithms are equivalent in exact arithmetic, significant difference can be observed in finite-precision arithmetic.
\end{remark}

\begin{remark}
The Gram matrix in Equation~\eqref{eq:Grammat} may numerically be singular because of rounding errors either in the computation in its eigenvalues or its moments.
One approach to avoid this problem is to consider an approximately orthonormal basis in place of the monomial basis.
By doing so, the Gram matrix would be already almost diagonal and thus more likely to be invertible.
This may be achieved, for example, by implementing an algorithm that computes the ONB for the first $j$ elements by using the ONB of the first $j-1$ elements enlarged with the monomial $x^{\pi_j}$.
\end{remark}

\section{Heston Option Price and Greeks} \label{sec:Heston}

The option price formula for the Heston model can be found in many textbooks, in this section we use the representation found in~\cite{gatheral2011volatility}.
With the notations of Section~\ref{sec:calib}, the option price for a given log strike ${\bar k}$ is given by
\[
\pi = \e^{(r-\delta)T + x_0}P_1 - \e^{\bar k} P_0
\]
where the pseudo probabilities $P_0$ and $P_1$ are given by
\[
P_j = \frac{1}{2} + \frac{1}{\pi} \int_0^\infty \Re\left(\frac{\exp\left( C_j \theta + D_jv_0 + \im u ((r-\delta)T +x_0 - {\bar k}) \right)}{\im u} \right) du, \quad \text{for $j=0,1$}
\]
with the following elements which depend on $u$ and $j$
\begin{center}
\begin{tabular}{L  L }
C_j = \kappa \left(c_- T - \frac{2}{\sigma^2} \log \left( \frac{1 - (c_-/c_+)\e^{-c_0 T} }{1- c_-/c_+}\right)\right) &
D_j = c_- \frac{1 - \e^{-c_0 T}}{1 - (c_-/c_+)\e^{-c_0 T}}\\
c_0 = \sqrt{\beta^2- 2\alpha\sigma^2} & \alpha = -\frac{u(u + \im)}{2} + \im j u \\
c_\pm = (\beta \pm c_0)/\sigma^2 & \beta = \kappa - \rho \sigma (j + \im u)
\end{tabular}
\end{center}
The option Delta and Gamma are given by
\begin{align*}
\Delta_f &= \e^{(r-\delta)T}P_1 + \e^{(r-\delta)T+ x_0 } \frac{\partial P_1}{\partial \e^{x_0}} - \e^{\bar k} \frac{\partial P_0}{\partial \e^{x_0}}\\
\Gamma_f &= 2 \e^{(r-\delta)T} \frac{\partial P_1}{\partial \e^{x_0}} + \e^{(r-\delta)T+ x_0 } \frac{\partial^2 P_1}{\partial \e^{x_0} \partial \e^{x_0}} - \e^{\bar k} \frac{\partial^2 P_0}{\partial \e^{x_0} \partial \e^{x_0}}
\end{align*}
with the partial derivatives
\begin{align*}
\frac{\partial P_j}{\partial \e^{x_0}} &=  \frac{1}{\pi} \int_0^\infty \Re\left(\exp\left( C_j \theta + D_jv_0 + \im u (x_0 -1) + \im u ((r-\delta)T - {\bar k}) \right)\right) du \\
\frac{\partial^2 P_j}{\partial \e^{x_0} \partial \e^{x_0}} &=  \frac{1}{\pi} \int_0^\infty \Re\left((\im u-1)\exp\left( C_j \theta + D_jv_0 + \im u (x_0 -2) + \im u ((r-\delta)T +x_0 - {\bar k}) \right) \right) du
\end{align*}
for $j=0,1$.
\end{appendix}


\begin{table}
\begin{center}
\begin{tabular}{l||rr|rr|rr}
& \multicolumn{2}{c|}{${\bar k}=-0.1$} & \multicolumn{2}{c|}{${\bar k}=0$} & \multicolumn{2}{c}{${\bar k}=0.1$}\\
$N$ & $K=1$ & $K=2$ &  $K=1$ & $K=2$ & $K=1$ & $K=2$ \\
  \hline  \hline
0--1 & 23.59 & 3.67 & 26.67 & 1.02 & 25.97 & 3.25 \\
  2 & 3.77 & 3.67 & 7.47 & 1.02 & 2.88 & 3.25 \\
  3 & 2.63 & 1.89 & 7.36 & 0.87 & - & 0.17 \\
  4 & 7.38 & 1.86 & 4.03 & 0.77 & - & 0.03 \\
  5 & 5.55 & 1.01 & 4.00 & 0.72 & - & 2.47 \\
  6 & 5.18 & 0.88 & 2.57 & 0.58 & - & 1.80 \\
  7 & 4.22 & 0.55 & 2.55 & 0.55 & - & 3.00 \\
  8 & 3.23 & 0.38 & 1.78 & 0.42 & - & 2.10 \\
  9 & 2.73 & 0.28 & 1.77 & 0.40 & - & 2.32 \\
  10 & 1.91 & 0.12 & 1.30 & 0.29 & - & 1.66 \\
  11 & 1.64 & 0.12 & 1.29 & 0.28 & - & 1.57 \\
  12 & 1.02 & 0.00 & 0.98 & 0.19 & - & 1.15 \\
  13 & 0.88 & 0.02 & 0.97 & 0.19 & - & 1.02 \\
  14 & 0.42 & 0.06 & 0.76 & 0.12 & - & 0.78 \\
  15 & 0.37 & 0.04 & 0.75 & 0.12 & - & 0.67 \\
  16 & 0.04 & 0.09 & 0.60 & 0.07 & - & 0.53 \\
  17 & 0.03 & 0.06 & 0.59 & 0.07 & - & 0.44 \\
  18 & 0.21 & 0.09 & 0.48 & 0.04 & 5.12 & 0.38 \\
  19 & 0.18 & 0.07 & 0.47 & 0.03 & 4.43 & 0.31 \\
  20 & 0.35 & 0.08 & 0.39 & 0.01 & 3.12 & 0.28 \\
  30 & 0.39 & 0.00 & 0.15 & 0.01 & 0.49 & 0.04 \\
  40 & 0.15 & 0.04 & 0.06 & 0.01 & 0.16 & 0.09 \\
  50 & 0.02 & 0.04 & 0.02 & 0.01 & 0.32 & 0.10 \\
\end{tabular}
\end{center}
\caption{Implied volatility errors for the Jacobi model.}
\scriptsize
The reported values are absolute percentage errors with respect to implied volatility approximations obtained at the 100-th truncation order for call options with different log strikes ${\bar k}$.
The auxiliary density is a Gaussian mixture with two components  whose two first moments match those of the log price.
The ''--'' symbol indicates that the implied volatility was not retrievable because the option price approximation was negative.
\label{tab:IV-GM}
\end{table}


\begin{table}
\begin{center}
\begin{tabular}{l||rr|rr|rr}
& \multicolumn{2}{c|}{$K=3$} & \multicolumn{2}{c|}{$K=10$} & \multicolumn{2}{c}{$K=50$}\\
$N$ & GM & GM$^+$ &  GM & GM$^+$ &  GM & GM$^+$   \\
  \hline  \hline
0--1 & 1.26 & 0.64 & 0.01 & 1.85 & 0.16 & 1.95 \\
  2 & 0.75 & 0.12 & 0.20 & 0.24 & 0.10 & 0.24 \\
  3 & 0.26 & 0.02 & 0.15 & 0.19 & 0.16 & 0.18 \\
  4 & 0.43 & 0.07 & 0.09 & 0.05 & 0.15 & 0.05 \\
  5 & 0.08 & 0.03 & 0.06 & 0.03 & 0.08 & 0.03 \\
  6 & 0.68 & 0.06 & 0.04 & 0.01 & 0.03 & 0.02 \\
  7 & 0.46 & 0.03 & 0.01 & 0.00 & 0.04 & 0.01 \\
  8 & 1.27 & 0.04 & 0.01 & 0.01 & 0.01 & 0.02 \\
  9 & 1.56 & 0.03 & 0.03 & 0.01 & 0.00 & 0.01 \\
  10 & 2.94 & 0.03 & 0.04 & 0.02 & 0.01 & 0.03 \\
  11 & 6.06 & 0.02 & 0.06 & 0.01 & 0.01 & 0.02 \\
  12 & 8.02 & 0.02 & 0.04 & 0.02 & 0.01 & 0.03 \\
  13 & 25.63 & 0.01 & 0.05 & 0.02 & 0.01 & 0.02 \\
  14 & 23.63 & 0.01 & 0.18 & 0.02 & 0.02 & 0.03 \\
  15 & -- & 0.01 & 0.10 & 0.02 & 0.01 & 0.03 \\
  16 & -- & 0.01 & 0.43 & 0.02 & 0.02 & 0.03 \\
  17 & -- & 0.01 & 0.03 & 0.02 & 0.01 & 0.03 \\
  18 & -- & 0.01 & 1.30 & 0.02 & 0.02 & 0.02 \\
  19 & -- & 0.01 & 0.63 & 0.02 & 0.02 & 0.02 \\
  20 & -- & 0.00 & 3.82 & 0.02 & 0.02 & 0.02 \\
  \hline
  30 & -- & 0.00 & -- & 0.00 & 0.00 & 0.00 \\
  40 & -- & 0.00 & -- & 0.00 & 4.65 & 0.00 \\
  50 & -- & 4.34 & -- & 4.60 & -- & 9.96 \\
\end{tabular}
\end{center}
\caption{Implied volatility errors for the Stein--Stein model.}
\scriptsize
The reported values are absolute percentage errors with respect to the implied volatility computed with Fourier transform technique.
The GM$^+$ column refers to option price approximations obtained with a Gaussian mixture auxiliary density whose $20$-th moment is matching $\E[X_T^{20}]$.
The ''--'' symbol indicates that the implied volatility was not retrievable either because the option price approximation was negative or because the implied volatility was larger than 99\%.
\label{tab:SteinIV}
\end{table}

\begin{table}
\begin{center}
\begin{tabular}{l||rr|rr|rr}
& \multicolumn{2}{c|}{$K=3$} & \multicolumn{2}{c|}{$K=10$} & \multicolumn{2}{c}{$K=50$}\\
$N$ & GM & GM$^+$ &  GM & GM$^+$ &  GM & GM$^+$   \\
  \hline  \hline
  0--1 & 19.26 & 20.98 & 20.32 & 21.98 & 20.41 & 22.06 \\
  2 & 20.91 & 20.34 & 20.47 & 20.46 & 20.37 & 20.44 \\
  3 & 20.52 & 20.27 & 20.41 & 20.42 & 20.41 & 20.41 \\
  4 & 19.97 & 20.32 & 20.36 & 20.31 & 20.40 & 20.29 \\
  5 & 20.36 & 20.29 & 20.35 & 20.29 & 20.36 & 20.28 \\
  6 & 20.83 & 20.33 & 20.34 & 20.29 & 20.33 & 20.29 \\
  7 & 19.99 & 20.31 & 20.30 & 20.28 & 20.34 & 20.28 \\
  8 & 19.36 & 20.33 & 20.32 & 20.30 & 20.33 & 20.30 \\
  9 & 21.61 & 20.32 & 20.38 & 20.30 & 20.32 & 20.30 \\
  10 & 22.71 & 20.33 & 20.30 & 20.32 & 20.33 & 20.32 \\
  11 & 14.88 & 20.32 & 20.25 & 20.31 & 20.33 & 20.31 \\
  12 & 13.37 & 20.33 & 20.44 & 20.33 & 20.33 & 20.33 \\
  13 & 47.02 & 20.33 & 20.48 & 20.33 & 20.33 & 20.33 \\
  14 & 42.79 & 20.33 & 19.93 & 20.33 & 20.33 & 20.34 \\
  15 & -- & 20.33 & 20.10 & 20.33 & 20.33 & 20.34 \\
  16 & -- & 20.33 & 21.94 & 20.34 & 20.33 & 20.34 \\
  17 & -- & 20.33 & 20.29 & 20.34 & 20.33 & 20.34 \\
  18 & 66.98 & 20.33 & 13.62 & 20.34 & 20.34 & 20.34 \\
  19 & -- & 20.33 & 24.63 & 20.34 & 20.34 & 20.34 \\
  20 & -- & 20.33 & 50.84 & 20.34 & 20.32 & 20.34 \\
    \hline
  30 & -- & 20.39 & -- & 20.40 & 37.42 & 20.41 \\
  40 & -- & -- & -- & -- & -- & -- \\
  50 & -- & -- & -- & -- & -- & -- \\
\end{tabular}
\end{center}
\caption{Implied volatility for the Hull--White model.}
\scriptsize
The reported values are in percentage.
The GM$^+$ column refers to option price approximations obtained with a Gaussian mixture auxiliary density whose $20$-th moment is matching $\E[X_T^{20}]$.
The ''--'' symbol indicates that the implied volatility was not retrievable either because the option price approximation was negative or because the implied volatility was larger than 99\%.
\label{tab:HullWhiteIV}
\end{table}


\begin{figure}
\begin{center}
\includegraphics{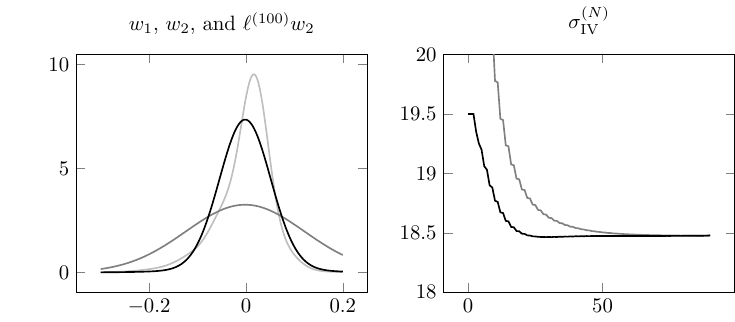}
\end{center}
\caption{Auxiliary densities and implied volatility convergence.}
\scriptsize
The left panel displays the Gausian mixture used as auxiliary density with one (grey) and two (black) components as well as the log price density approximation at the order $N=100$ (light gray).
The right panel displays the implied volatility series for the corresponding two auxiliary densities for a call option with strike ${\bar k}=0$.
\label{fig:IV-GM}
\end{figure}


\begin{figure}
\begin{center}
\includegraphics{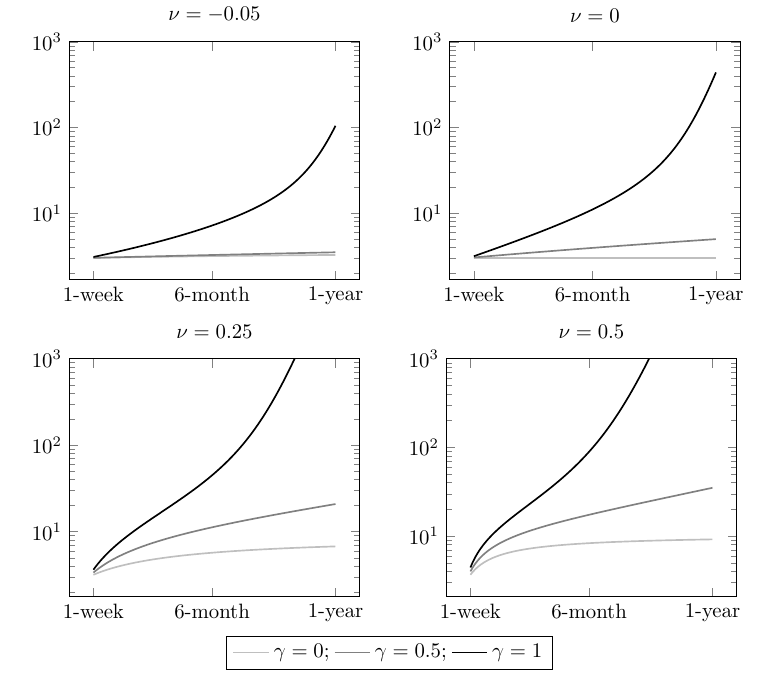}
\end{center}
\caption{Kurtosis values for the Hull--White model.}
\scriptsize
The kurtosis is displayed for maturities between one week and on year, and for different combination of the $\nu$ and $\gamma$ parameters.
The light gray line $\gamma=0$ gives the kurtosis of the nested Stein--Stein model.
\label{fig:HWkurtosis}
\end{figure}

\begin{figure}
\begin{center}
\includegraphics{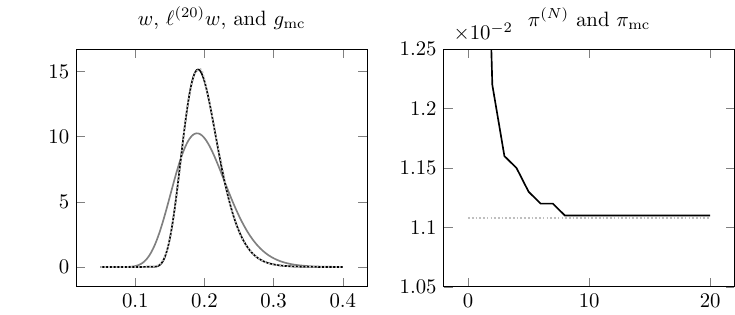}
\end{center}
\caption{Realized variance density and option price approximation for the GARCH model.}
\scriptsize
The left panel displays the auxiliary density (gray), the density approximation at the order $N=20$ (black), and the simulated density (dotted light gray) from 100'000 paths.
Note that the last two densities almost perfectly coincide.
The right panel displays the option price series for a call option with strike ${\bar k}=0.20$ (black) as well as the Monte-Carlo price (dotted light gray).
\label{fig:RVoption}
\end{figure}

\begin{figure}
\begin{center}
\includegraphics{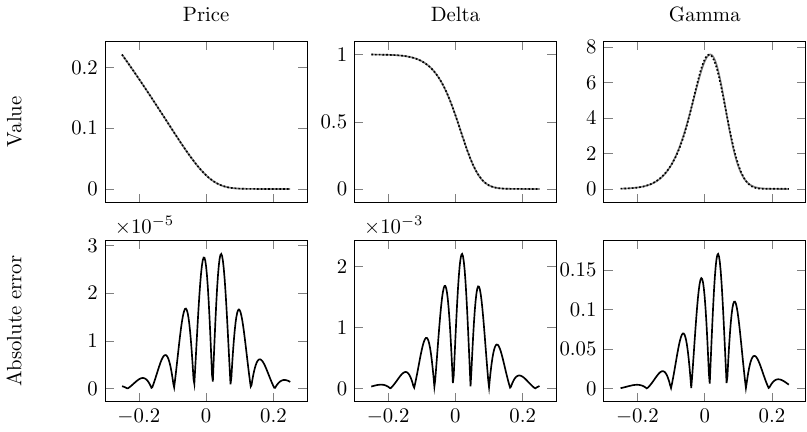}
\end{center}
\caption{Option price, Delta, and Gamma for the Heston model.}
\scriptsize
Panels on the first row display the price (left), the Delta (middle), and the Gamma (right) for the Heston computed with the Fourier transform formula (gray) and the orthogonal polynomial expansion (black dotted).
The second row displays the absolute errors between the Fourier transform and approximation values.
\label{fig:greeks}
\end{figure}

\begin{figure}
\begin{center}
\includegraphics{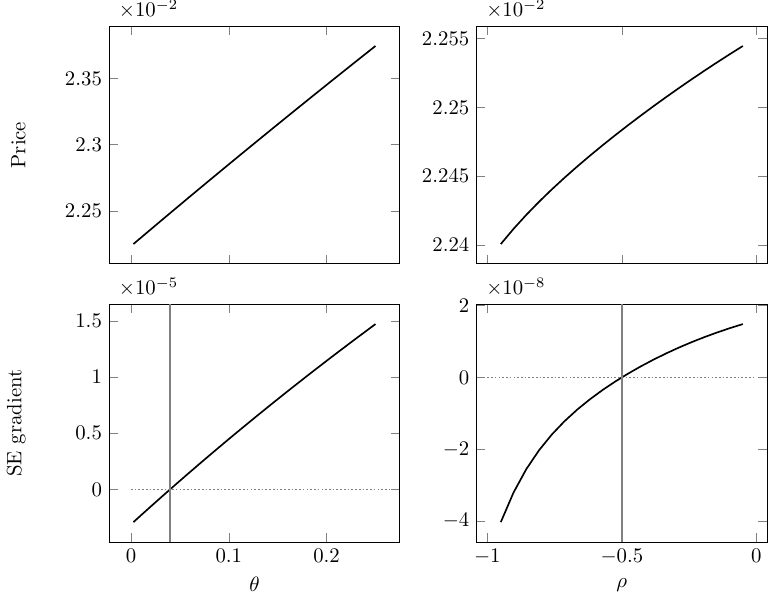}
\end{center}
\caption{Theta and Rho sensitivities and gradients for the Heston model.}
\scriptsize
The first row displays the approximated option price when the $\theta$ (left) and the $\rho$ (right) parameters are varied and using the auxiliary density of the reference parameters.
The second row displays the gradient of the squared price error with respect to the reference price whose parameters value are indicated by the vertical lines.
The horizontal dotted line emphasizes that the zero gradient is attained for the reference parameter values and that the gradient's sign is correct.
\label{fig:parsens}
\end{figure}

\begin{figure}
\begin{center}
\includegraphics{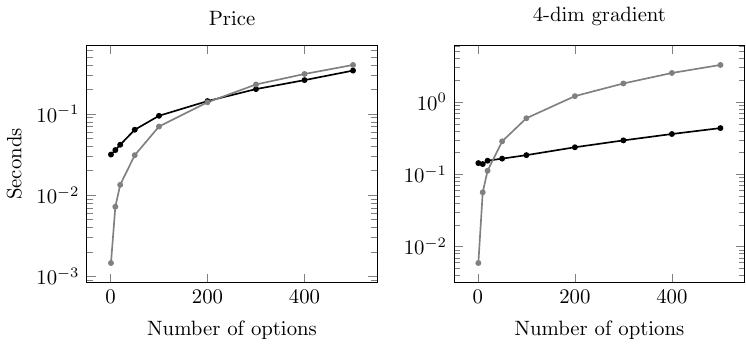}
\end{center}
\caption{Option price and gradient CPU time.}
\scriptsize
The left panel displays the CPU time required to compute option prices for the Heston model using the Fourier transform formula (gray) and the orthogonal polynomial expansion (black) for different number of contracts.
The right panel displays the CPU time required to compute the option price gradient with respect to $\kappa$, $\theta$, $\sigma$, and $\rho$ for different number of contracts.
\label{fig:cpu}
\end{figure}

\begin{figure}
\begin{center}
\includegraphics{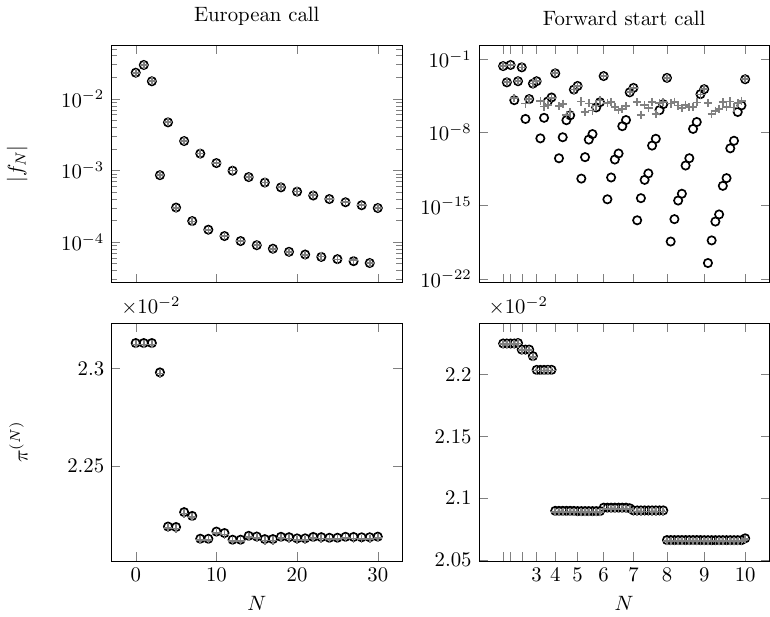}
\end{center}
\caption{Weighted least squares approximation of the payoff coefficients.}
\scriptsize
The first row displays the exact (circles) and approximated (crosses) payoff coefficients for the call and forward start call options.
The second row displays the corresponding option price series for the Jacobi model.
\label{fig:payoffapp}
\end{figure}

\processdelayedfloats

\newpage
\bibliographystyle{chicago}
\bibliography{OPE}

\end{document}